\newcommand{\qh}{{\bf h}}
\newcommand{\qp}{{\bf p}}
\newcommand{\qv}{{\bf v}}
\newcommand{\qw}{{\bf w}}
\newcommand{\qG}{{\bf G}}
\newcommand{\qI}{{\bf I}}
\newcommand{\qW}{{\bf W}}
\newcommand{\qzero}{{\bf 0}}
\newcommand{\tr}{\mbox{trace}}
\newtheorem{theorem}{Theorem}
\newcommand{\be}{\begin{equation}} \newcommand{\ee}{\end{equation}}
\newcommand{\bea}{\begin{eqnarray}} \newcommand{\eea}{\end{eqnarray}}
\newcommand{\benum}{\begin{enumerate}} \newcommand{\eenum}{\end{enumerate}}
\begin{document}

\title{Beamforming for MISO Interference Channels with QoS and RF Energy Transfer}

\author{Stelios Timotheou,~\IEEEmembership{Member,~IEEE,} Ioannis Krikidis,~\IEEEmembership{Senior Member,~IEEE,}~\\Gan Zheng,~\IEEEmembership{Senior Member,~IEEE,} and ~Bj\"orn Ottersten,~\IEEEmembership{Fellow,~IEEE}
\thanks{S. Timotheou and I. Krikidis are with the Department of Electrical and Computer Engineering, University of Cyprus, Cyprus (E-mail: \{timotheou.stelios, krikidis\}@ucy.ac.cy).}
\thanks{G. Zheng is with the School of Computer Science and Electronic Engineering, University of Essex, UK, (E-mail: ganzheng@essex.ac.uk). He is also affiliated with Interdisciplinary Centre for Security, Reliability and Trust (SnT),  University of Luxembourg, Luxembourg.}
\thanks{B. Ottersten is with the Interdisciplinary Centre for Security, Reliability and Trust (SnT), University of Luxembourg (E-mail: bjorn.ottersten@uni.lu).}
\thanks{ A preliminary version of this work that introduced the considered problem and discussed ZF and MRT fixed beamforming solutions appears in \cite{TIM}.}}

\maketitle

\maketitle

\begin{abstract}
We consider a multiuser multiple-input single-output interference channel where the receivers are characterized by both quality-of-service (QoS) and radio-frequency (RF) energy harvesting (EH) constraints. We consider the power splitting RF-EH technique where each receiver divides the received signal into two parts a) for information decoding and b) for battery charging. The minimum required power that supports both the QoS and the RF-EH constraints is formulated as an optimization problem that incorporates the transmitted power and the beamforming design at each transmitter as well as the power splitting ratio at each receiver.  We consider both the cases of fixed beamforming and when the beamforming design is incorporated into the optimization problem.
For fixed beamforming we study three standard beamforming schemes, the zero-forcing (ZF), the regularized zero-forcing (RZF) and the maximum ratio transmission (MRT); a hybrid scheme, MRT-ZF, comprised of a linear combination of MRT and ZF beamforming is also examined. The optimal solution for ZF beamforming is derived in closed-form, while optimization algorithms based on second-order cone programming are developed for MRT, RZF and MRT-ZF beamforming to solve the problem. In addition, the joint-optimization of beamforming and power allocation is studied using semidefinite programming (SDP) with the aid of rank relaxation.    
\end{abstract}

\begin{keywords}
Radio-frequency energy transfer, beamforming, power splitting, MISO channel, second-order cone programming, optimization.
\end{keywords}

\section{Introduction}

Recently, energy harvesting (EH) communication systems that can harvest energy from a variety of natural and man-made sources (solar, wind, mechanical vibration, etc.) for sustainable network operation has attracted much interest. EH opens new challenges in the analysis and design of transmission schemes and protocols that efficiently handle the harvested  energy.  Most literature concerns the optimization of different network utility functions under various assumptions on the knowledge of the energy profiles. The works in \cite{YEN,IAN,OZE,GUN} assume that the EH profile is perfectly known at the transmitters and investigate optimal resource allocation techniques for different objective functions and network topologies. On the other hand, the works in \cite{EPH,PAP} adopt a more networking point of view and maximize the stability region of the network by assuming only statistical knowledge of the EH profile.   However, the main limitation of the conventional EH sources is that in most cases they are not controlled and thus not always available; this uncertainty can be critical for some applications where reliability is of paramount importance.  A promising harvesting technology that could overcome this bottleneck, is the radio frequency (RF) energy transfer where the ambient RF radiation  is captured by the receiver antennas and  converted into a direct current (DC) voltage through appropriate circuits (rectennas) \cite{POP2}.  The RF-EH can be fully-controlled and therefore can be used for applications with critical quality-of-service (QoS) constraints. In addition, for some applications where other EH technologies can not be deployed (such as wireless soil sensor networks), RF energy transfer seems to be a suitable solution.

RF-EH is a new research area that attracts the attention of both academia and industry. The long term perspective is to be able to capture the electromagnetic radiation that is available in the surrounding (TV towers, cellular base-stations etc.) and use it in order to power communication systems. Most of the work on RF energy transfer focuses on the circuit and rectenna design as it is a vital requirement to make RF-EH feasible \cite{SUN,NIN}. On the other hand, the development of protocols and transmission techniques for wireless networks with RF-EH capabilities is a new research direction and few studies appear in the literature. The fundamental concept of simultaneous wireless transmission of energy and information is discussed in \cite{GROV} from an information theoretic standpoint. The work in \cite{SIM1} discusses the simultaneous information and energy transfer for a basic resource-constrained two-way communication link without energy losses (ideal energy recycling).  In \cite{FOU},  the authors characterize the capacity of two basic multi-user network configurations (multiple access channel, multi-hop channel) when simultaneous information and energy transfer is employed.

However, due to practical hardware constraints, simultaneous energy and information transmission is not possible with existing technology. In \cite{ZHA} the authors study practical beamforming techniques in a simplified multiple-input multiple-output (MIMO) network that ensure QoS and EH constraints for two separated receivers, respectively. In that work, the authors also discuss two practical receiver approaches for simultaneous wireless power and information transfer a) ``time switching'' (TS), where the receiver switches between decoding information and harvesting energy and b) ``power splitting'' (PS), where the receiver splits the received signal in two parts for decoding information and harvesting energy, respectively. This work is extended in \cite{XIA} for scenarios with imperfect channel state information (CSI) at the transmitter and a robust beamforming design is presented. The authors in \cite{LIU} deal with the TS technique and propose a dynamic switching between decoding information and EH in order to achieve various trade-offs between wireless information transfer and EH; both cases with and without channel state information (CSI) at the transmitter are considered. This fundamental switching between information transfer and EH is discussed in \cite{KRI} for a cooperative relaying scenario with a discrete-level battery at the relay node and no CSI at the transmitter. Other approaches implement this ``simultaneous'' information and energy transfer by assuming multiple receivers where  some of them use the transmitted signal for information decoding and other for energy harvesting i.e. \cite{XU, HIM}. In \cite{XU}, the authors investigate the optimal multiple-input single-output (MISO) beamformer for a network with a single transmitter and multiple information receivers and energy harvesters.  The work in \cite{HIM} deals with the problem of relay selection for a system where the selected relay conveys information to a data decoder while it simultaneously transfers energy to an energy harvester.

On the other hand, more recent works deal with the PS technique that allows an artificial simultaneous information and energy transfer \cite{NAS,LIU2}. In \cite{NAS}, the authors derive the outage probability and the ergodic capacity for a basic cooperative network with a PS-based relay; in that work the optimal power splitting ratio is evaluated based on numerical results. The work in \cite{LIU2} studies the optimal power splitting rule at the receiver in order to achieve various trade-offs between information transfer (ergodic capacity) and maximum average harvested energy; both cases with and without CSI at the transmitter are discussed. However, most of the work in the literature focuses on simple network topologies with single transmitters; the application of the RF-EH technology to more complex networks configurations is still an open problem.

In this paper, we focus on the PS approach \cite{ZHA} and we study a MISO interference channel where the downlink receivers are characterized by both QoS and EH constraints. For fixed beamforming weights and known CSI at the transmitters, we optimize the values for the power split at each receiver as well as the transmitted power for each source with the objective of minimizing the total transmitted power. Different solution methodologies are proposed for general beamforming schemes, such as the maximum ratio transmission (MRT) \cite{LAR,LEI}, zero-forcing (ZF) and reguralized zero-forcing (RZF) beamforming, and a hybrid scheme that combines the MRT and ZF beamformers (MRT-ZF),  while an optimal closed-form solution is derived for zero-forcing (ZF) beamforming. A comparison between them shows that ZF, which is considered as an efficient beamforming design for conventional MISO systems, requires significantly more transmit power compared to MRT beamforming, but always leads to feasible solutions for the considered problem (when the number of antennas at each transmitter is no less than the number of receivers). By combining the best of both worlds, MRT-ZF always provides feasible solutions of significantly better quality compared to the standard beamforming approaches, at a small increase of computational complexity. 

In addition to the case of fixed beamforming that keeps the complexity low, the optimal benchmark scheme that jointly optimizes beamforming weights, transmit power and power splitting ratios in order to minimize the total power consumption, is investigated. The use of semidefinite programming (SDP) together with rank relaxation is proposed to approximate the optimal solution and an algorithm to recover the beamforming solution is designed. Furthermore, we prove that when there are two or three users, the proposed approach always gives rank-1 solutions, from which the exactly optimal beamforming solution can be obtained. The SDP approach is superior in terms of solution quality amongst all the schemes investigated providing the optimal solution in all problem instances considered, but its execution time becomes prohibitive for increasing problem sizes; the best trade-off between solution quality and computational complexity is provided by the MRT-ZF beamforming scheme.

The contributions of this paper are the following:
\begin{enumerate}
	\item Introduction and formulation of the problem; MISO interference channel with simultaneous QoS and RF-EH requirements. 
	\item Efficient approximation of the optimal solution to the general problem using SDP; in practice, the SDP scheme provided the optimal solution in all problem instances examined. 
	\item Formulation of a second-order cone program for the optimal solution of the power allocation and splitting problem for any beamformer (constant beamforming weights). Derivation of a closed-form solution for ZF beamforming.
	\item Development of a fast and efficient suboptimal approach for the solution of the general problem that is based on optimally combining the MRT and ZF beamformers. 
\end{enumerate}

{\underline{\it Notation}:} All boldface letters indicate vectors (lower case) or matrices (upper case). The superscripts $(\cdot)^{T}$, $(\cdot)^H$,  $(\cdot)^{-1}$, $(\cdot)^{\dag}$ denote the transpose, the conjugate transpose, the matrix inverse and the Moore-Penrose pseudo-inverse respectively. A circularly symmetric complex Gaussian random variable $z$ with mean $\mu$ and variance $\sigma^2$ is represented as $\mathcal{CN}\left(\mu, \sigma^2\right)$. The identity matrix of size $M$, and the zero matrix of size $m\times n$, are denoted by $\mathbf{I}_{M}$ and $\mathbf{0}_{m\times n}$, respectively. $||\mathbf{z}||$ denotes the Euclidean norm of a complex vector $\mathbf{z}$, $|z|$ denotes the magnitude of a complex variable $z$, $\tr{(\mathbf{A})}$ denotes the trace of a matrix $\mathbf{A}$, while $\mathbf{A}\succeq 0$ indicates that matrix $\mathbf{A}$ is positive semidefinite.

This paper is organized as follows: Section II sets up the system model and introduces the optimization problem. Section III, investigates the joint optimization of the transmitted power, beamforming design and power splitting ratios.  Section IV discusses some fixed MISO beamforming schemes, while Section V investigates the solution of the arising optimization problems under these fixed beamforming schemes.  Simulation results are presented in Section VI, followed by our conclusions in Section VII.

\section{System model and problem formulation}\label{system_model}

We assume a MISO interference channel consisting of $K$ multiple-antenna sources and $K$ single-antenna receivers that employ single-user detection; each source $S_i$ communicates with its corresponding receiver $D_i$ ($i=1,\ldots,K$). This configuration moves almost all the signal processing at the transmitters (i.e., base-stations in a cellular context) and ensures simple receivers, which due to size limitation cannot support multiple antennas. Each transmitter requires that the number of transmit antennas is at least equal to the sum of all receivers' antennas in order to satisfy the dimensionality constraint required to force to zero the cross-interference at each receiver \cite{LAR,SPE}.  For the sake of simplicity and without loss of generality, we assume a symmetric topology where each source holds the minimum required number of antennas, which is equal to the number of the receivers in the considered setup (e.g., $K$ antennas at each transmitter). The case of having more receivers than transmit antennas is not considered  because the feasibility of different beamforming schemes is not guaranteed and more importantly, it is difficult to provide satisfactory QoS to individual receivers. The contribution of this work as well as our main conclusions are not restricted by this assumption.  The system model considered is depicted in Fig. \ref{model}. We assume that the source $S_i$ transmits with a power $P_i$ and let $s_i$ be its transmitted data symbol with $\mathbb{E}\{\|s_i\|^2 \}=1$. The transmitted data symbol $s_i$ is mapped onto the antenna array elements by the beamforming vector $\mathbf{w}_i \in \mathbb{C}^{K\times 1}$ with $\|\mathbf{w}_i\|=1$.

All wireless links exhibit independent fading and Additive White Gaussian Noise (AWGN) with zero mean and variance $\sigma^2$. The fading is assumed to be frequency non-selective Rayleigh block fading.  This means that the fading coefficients in the vector channel $\mathbf{h}_{i,j} \in \mathbb{C}^{K\times 1}$ (for the $S_j\rightarrow D_i$ link) remain constant during one slot, but change independently from one slot to another according to the distribution $\mathcal{CN}\left(0, \sigma_{ch_{i,j}}^2\right)$.  The variance of the channel coefficients captures large-scale degradation effects such as path-loss and shadowing. The received symbol sampled baseband signal at receiver $D_i$ can be expressed as
\begin{align}\label{sys1}
y_i=\underbrace{\sqrt{P_i}\mathbf{h}_{i,i}^T \mathbf{w}_i s_i}_{\textrm{Information signal}}+\underbrace{\sum_{j\neq i}\sqrt{P_j}\mathbf{h}_{i,j}^T \mathbf{w}_js_j}_{\textrm{Interference}}+n_i,
\end{align}
where $n_i$ denotes the AWGN component; therefore the received power at $D_i$ is equal to
\begin{equation}\label{eq:TotalPower}
P^r_i=\displaystyle \sum_{j=1}^{K} |\mathbf{h}_{i,j}^T \mathbf{w}_j|^2 P_{j} + \sigma^2.
\end{equation}

The receivers have RF-EH capabilities and therefore can harvest energy from the received RF signal based on the power splitting technique \cite{ZHA}. With this approach, each receiver splits its received signal in two parts a) one part is converted to a baseband signal for further signal processing and data detection and b) the other part is driven to the required circuits for conversion to DC voltage and energy storage. Let $\rho_i \in (0,1)$ denote the power splitting parameter for the $i$-th receiver; this means that $100 \rho_i \%$ of the received power is used for data detection while the remaining amount is the input to the RF-EH circuitry.  More specifically, after reception of the RF signal at the receiver, a power splitter divides the power $P^r_i$ into two parts according to $\rho_i$, so that $\rho_i P^r_i$ is directed towards the decoding unit and $(1 - \rho_i)P^r_i$ towards the EH unit. During the baseband conversion, additional circuit noise, $v_i$, is present due to phase offsets and non-linearities which is modeled as AWGN with zero mean and variance $\sigma_C^2$ \cite{ZHA}. Fig. \ref{power_split} schematically shows the power splitting technique for the $i$-th receiver.

The signal-to-interference-plus-noise ratio (SINR) metric characterizing the data detection process at the $i$-th receiver is given by:
\begin{equation}
\Gamma_i= \frac{\rho_i P_i|\mathbf{h}_{i,i}^T \mathbf{w}_i|^2}{\rho_i\left(\sigma^2+\sum_{j\neq i}P_j|\mathbf{h}_{i,j}^T\mathbf{w}_j|^2\right) + \sigma_C^2}.
\label{eq:Gamma}
\end{equation}

On the other hand, the total electrical power that can be stored is equal to $P_i^S = \zeta_i (1-\rho_i) P^r_i$, where $\zeta_i \in (0,1]$ denotes the conversion efficiency of the $i$-th EH unit\footnote{The parameter $\zeta$ depends on the frequency of operation, the received RF energy as well as the specifications of the diode-based rectification circuit. In \cite{POP2} a practical cellular energy harvesting system was designed,  operating at $\zeta_i\approx 0.25$ conversion efficiency.} ($100\cdot \zeta_i$\% of the RF energy received at the EH unit can be stored as electrical energy). We note that in this study, the RF-EH constraints considered refer to the required rectennas' input without discussing energy storage efficiency issues.

\begin{figure}[t]
\centering
\includegraphics[width=0.6\linewidth]{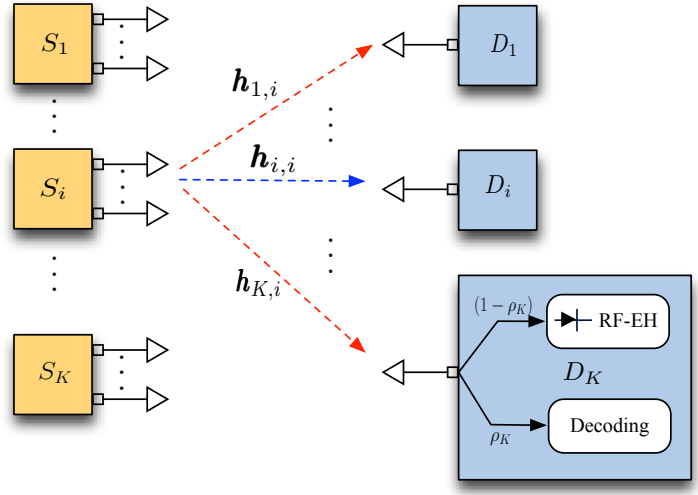}
\vspace{-0.5cm}
\caption{The MISO system model; each receiver splits the received energy in two parts for decoding and EH, respectively.}\label{model}
\end{figure}

\begin{figure}[t]
\centering
\includegraphics[width=0.6\linewidth]{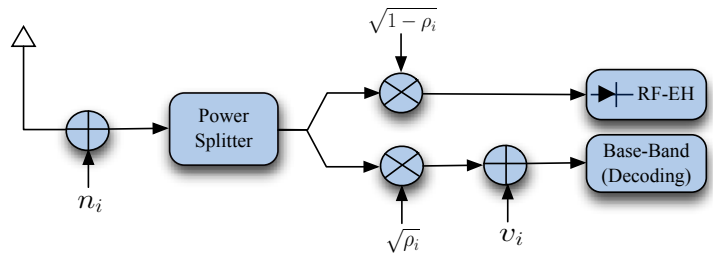}
\vspace{-0.5cm}
\caption{The power splitting technique for the $i$-th receiver.}\label{power_split}
\end{figure}
\vspace{-0.5cm}

\subsection{Optimization problem}

We assume that both receivers are characterized by strict QoS and EH constraints. The QoS constraint requires that the SINR should be higher than the threshold $\gamma_i$; the energy constraint requires that the input to the RF energy circuitry is higher than the energy threshold $\lambda_i$.   The energy harvesting constraint represents the minimum amount of energy that should trigger the rectenna's input of each receiver in order to ensure an efficient amount of energy harvested in each transmission time. This energy constraint and the associated energy harvested maintains operability and fully powers the equipment for generating/decoding signals, the power amplifiers, the antennas as well as all the devices that are involved in a radio transmission/reception process;  the harvested energy can be accumulated for future use (batteries) or used immediately. It is worth noting that relevant works such as \cite{ZHA,LIU} assume similar energy harvesting constraints (in terms of a target average harvested energy) in a different context. We focus on beamforming design, as well as power allocation and splitting, in order to minimize the total transmitted power subject to QoS and energy harvesting constraints. Based on the previous notation, the optimization problem can be defined as
\begin{align}
\vspace{-0.3cm}
\label{initFormulation}
&\min_{\mathbf{P}, \mathbf{W},\boldsymbol \rho} \sum_{i=1}^K P_i, \nonumber \\
&\textrm{subject to} \left\{
\begin{array}{l} \Gamma_i \geq \gamma_i, ~\forall i\\
(1-\rho_i)P_i^r \geq \lambda_i,~\forall i\\
P_i\ge 0, \|\mathbf{w}_i\|^2  = 1,\;\; 0 \le \rho_i \le 1,~\forall i
\end{array} \right.
\end{align}
where $P_i^r$, and $\Gamma_i$ are defined in \eqref{eq:TotalPower} and \eqref{eq:Gamma} respectively, while $\mathbf{W}=[\mathbf{w}_1,...,\mathbf{w}_K]$. 

The problem can be rewritten into an equivalent form by utilizing unnormalized beamforming weights $\qv_i=\sqrt{P_i}\qw_i$ which combine beamforming and power allocation as:
\bea\label{eqn:prob:0}
\min_{\{\qv_i, \rho_i\}}&& \sum_{i=1}^K \|\qv_i\|^2  \\
\mbox{s.t.}&&
  \Gamma_i = \frac{ |\qh_{i,i}^T\qv_i|^2}{\sum\limits_{j \ne
i}  |\qh_{i,j}^T\qv_j|^2 + \sigma^2 + \frac{\sigma_C^2}{\rho_i}} \ge \gamma_i,\notag\\
&& (1-\rho_i) \left(\sum\limits_{j=1}^K  |\qh_{i,j}^T\qv_j|^2 + \sigma^2  \right)\ge \lambda_i, 0\le \rho_i \le 1 ~\forall i.\notag \eea
It is easy to see that formulation \eqref{eqn:prob:0} is non-convex and hence quite challenging to solve. 

In the traditional beamforming problem the SINR constraints become convex by expressing them as second order rotated cone constraints \cite{SOCP_Beamforming}: 
\[\gamma_i \left(\sigma^2+\sum_{j\neq i}|\mathbf{h}_{i,j}^T\mathbf{v}_j|^2 \right) \le |\mathbf{h}_{i,i}^T \mathbf{v}_i|^2, ~\forall i\]
\[\mathbf{h}_{i,i}^T \mathbf{v}_i \ge 0,~\forall i,\]
Nonetheless, the introduction of the power splitting parameters destroys this useful structure and makes the SINR constraints non-convex. 

Even if we assume that the power splitting parameters are constant, the RF-EH constraints are non-convex because they are composed of a sum of concave terms:
\[ - (1-\rho_i)\sum_{j=1}^{K} |\mathbf{h}_{i,j}^T \mathbf{v}_j|^2 \le  (1-\rho_i)\sigma^2 - \lambda_i, \forall i.\]
Hence even if the power splitting parameters are constant in formulations \eqref{initFormulation} or \eqref{eqn:prob:0}, the problem is a non-convex Quadratically Constrained Quadratic Program (QCQP) \cite{QCQP_NP_Hard}. Having variable power splitting parameters further increases the difficulty of dealing with the problem.

\section{Optimal beamforming}
\label{sec:optBeamforming}

In this section we develop an efficient SDP algorithm that jointly optimizes the beamforming vectors, as well as the power and power splitting parameters. Joint optimization provides optimal performance at the cost of relatively high computational complexity. Hence, it is more suitable for the solution of small problems or as a performance benchmark. 

\subsection{Semidefinite programming (SDP) with rank relaxation}
  To tackle (\ref{eqn:prob:0}), we first  introduce new matrix variables $\qW_i=\qv_i\qv_i^H, \forall i$. Using $\{\qW_i\}$, problem
  (\ref{eqn:prob:0}) can be rewritten as
 \begin{subeqnarray} \label{eqn:prob:sdp} \slabel{subeq:sdp0}
\min_{\{\qW_i, \rho_i,\,\forall i\}}&& \sum_{i=1}^K \tr(\qW_i)  \\
\mbox{s.t.}&&
    \frac{ \tr(\qh_{i,i}^*\qh_{i,i}^T\qW_i)}{\sum\limits_{j=1,j \ne
i}^K  \tr(\qh_{i,j}^*\qh_{i,j}^T\qW_j) + \sigma^2 + \frac{\sigma_C^2}{\rho_i}} \ge \gamma_i,\slabel{subeq:sdp1}\\
&& (1-\rho_i) \left(\sum\limits_{j=1}^K  \tr(\qh_{i,j}^*\qh_{i,j}^T\qW_j)  + \sigma^2  \right)\ge \lambda_i ~\forall i,\slabel{subeq:sdp2}\\
&& \mathbf{W}_{i}\succeq\mathbf{0},~0\le\rho_{i}\le1,\,\forall i  \slabel{subeq:sdp3}
\end{subeqnarray}
 which can be further expressed as
\begin{eqnarray}\label{eqn:prob:sdp2}
{\displaystyle \min_{\{\mathbf{W}_{i},\,\rho_{i},\,\forall i\}}} && {\displaystyle \sum_{i=1}^{K}}\text{trace}(\mathbf{W}_{i}) \\
\mbox{s.t.}&& \mathbf{A}_i=\left[\begin{array}{ll}
\frac{1}{\gamma_{i}}\text{trace}(\mathbf{h}_{i,i}^{*}\mathbf{h}_{i,i}^{T}\mathbf{W}_{i})-\sum\limits _{j=1,j\ne i}^{K}\text{trace}(\mathbf{h}_{i,j}^{*}\mathbf{h}_{i,j}^{T}\mathbf{W}_{j})-\sigma^{2}, & \sigma_{C}\\
\sigma_{C}, & \rho_{i}
\end{array}\right]\succeq0,\,\forall i \notag\\
 && \mathbf{B}_i=\left[\begin{array}{ll}
\sum\limits _{j=1}^{K}\text{trace}(\mathbf{h}_{i,j}^{*}\mathbf{h}_{i,j}^{T}\mathbf{W}_{j})+\sigma^{2}, & \sqrt{\lambda_{i}}\\
\sqrt{\lambda_{i}}, & 1-\rho_{i}
\end{array}\right]\succeq0,\,\forall i 	 \notag\\
 && \mathbf{W}_{i}\succeq\mathbf{0},~0\le\rho_{i}\le1,\,\forall i  \notag
\end{eqnarray}
In the above formulation, matrices $\mathbf{A}_i$ and $\mathbf{B}_i$, $\forall i$ are hermitian, while they are positive semidefinite if and only if constraints \eqref{subeq:sdp1} and \eqref{subeq:sdp2} are respectively true. The reason is that constraints \eqref{subeq:sdp1} and \eqref{subeq:sdp2} ensure that all leading principal minors of matrices $\mathbf{A}_i$ and $\mathbf{B}_i$ are nonnegative, a condition that ensures that a hermitian matrix is positive semidefinite.

Note that in order to make  (\ref{eqn:prob:sdp2}) equivalent to the original problem (\ref{eqn:prob:0}), additional rank constraints
$\mbox{rank}(\qW_i)=1$ should be added to (\ref{eqn:prob:sdp2}), which are nonconvex and difficult to deal with. To make the problem tractable, we
first relax this rank constraints and focus on (\ref{eqn:prob:sdp2}). This problem is convex and belongs to the class of SDP, as it is composed of a linear objective function over the intersection of cones of positive semidefinite matrices and linear matrix inequalities involving $\{\qW_i\}$ and $\rho_i$ variables. Its numerical solution can be found by using SDP solvers.

Once (\ref{eqn:prob:sdp2}) is optimally solved, if the resulting solutions $\{\qW_i\}$ are all rank-1, then they are the exactly optimal solutions;
otherwise, (\ref{eqn:prob:sdp2}) provides a lower bound for the minimum required power and how to extract a feasible solution will be discussed
later.  Note that formulation \eqref{eqn:prob:sdp2} can be easily extended to include transmit power constraints by adding the linear constraints $ \text{trace}(\mathbf{W}_{i}) \le P_i^{max}$, $\forall i$, as the problem remains convex and its complexity is not significantly affected.

 \subsection{Rank Issue and The Proposed Algorithm}
 As mentioned before, problem (\ref{eqn:prob:sdp2}) may not give the optimal solution to the original problem   (\ref{eqn:prob:0}) due to rank relaxation. In our simulation, we find that (\ref{eqn:prob:sdp2}) always gives rank-1 solutions which are also optimal to the original problem (\ref{eqn:prob:0}), however, this property is yet to be proven and left
 for future work.  In this section, we propose the following algorithm to find a good heuristic solution when higher-rank solutions are returned by (\ref{eqn:prob:sdp2}).

\underline{\textbf{ Proposed Algorithm 1:}}
 \begin{enumerate}
    \item Solve problem (\ref{eqn:prob:sdp2}) to obtain the optimal $\{\qW_i^\star\}$.
    \item For each $i$, if $\mbox{rank}(\qW_i)=1$, find eigenvalue decomposition $\qw_i^\star$ such that $\qW_i^\star = p_i\qw_i^\star{\qw_i^\star}^H$, $\|\qw_i^\star\|=1$; otherwise, choose
    $\qw_i^\star$ as the principal eigenvector of $\qW_i$, which corresponds to the largest eigenvalue of $\qW_i$.
    \item  With $\{\qw_i^\star\}$, solve the power allocation problem with fixed weights (\ref{eq:SOCP}) to obtain $\qp^\star$.
    \item Return the optimal beamforming vectors $\{\sqrt{p_i^\star}\qw_i^\star\}$.
 \end{enumerate}
 If $\mbox{rank}(\qW_i)>1$, we can also use randomization techniques to find $\qw_i$ \cite{Luo}.

 \subsection{Special Cases: K=2 and K=3}
 In general it is not known whether problem (\ref{eqn:prob:sdp2}) can guarantee to return rank-1 solutions which is optimal to (\ref{eqn:prob:0}). In
 this section, we study the rank properties of two special cases:  $K=2$ and $K=3$. We first give the following results.
 \begin{theorem}
    When $K=2$ and $K=3$, there exist rank-1 solutions that optimally solves problem (\ref{eqn:prob:sdp2}), which are also optimal to the original problem
    (\ref{eqn:prob:0}).
 \end{theorem}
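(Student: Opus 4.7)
The plan is to prove the rank-one property through a KKT analysis of the relaxed SDP (\ref{eqn:prob:sdp2}). I would begin by forming its Lagrangian, attaching dual matrices $\mathbf{Y}_i, \mathbf{Z}_i \succeq \mathbf{0}$ (each $2\times 2$) to the semidefinite constraints $\mathbf{A}_i \succeq \mathbf{0}$ and $\mathbf{B}_i \succeq \mathbf{0}$, a dual matrix $\mathbf{X}_i \succeq \mathbf{0}$ to $\qW_i \succeq \mathbf{0}$, and scalar multipliers for the $\rho_i$ bounds. The relaxed problem is convex and admits a strictly feasible interior point whenever the QoS and EH thresholds are not extreme, so Slater's condition holds, strong duality is available, and any primal--dual optimal pair must satisfy the KKT conditions.

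Stationarity with respect to $\qW_i$ then yields
\begin{equation*}
\mathbf{X}_i \;=\; \mathbf{I} \;+\; \sum_{j=1}^{K} c_{ji}\, \qh_{j,i}^{*}\qh_{j,i}^{T},
\end{equation*}
where the real scalars $c_{ji}$ are constructed from the $(1,1)$ entries of $\mathbf{Y}_j$ and $\mathbf{Z}_j$: the ``self'' coefficient $c_{ii}$ is non-positive (the SINR numerator and the EH term both enter with a negative sign in the Lagrangian), while the ``cross'' coefficients $c_{ji}$, $j\neq i$, may carry either sign. Together with complementary slackness $\mathbf{X}_i \qW_i = \mathbf{0}$ and $\mathbf{X}_i \succeq \mathbf{0}$, this gives the rank inequality $\mbox{rank}(\qW_i) + \mbox{rank}(\mathbf{X}_i) \le K$, so the theorem reduces to proving $\mbox{rank}(\mathbf{X}_i) \ge K-1$ for every $i$.

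The heart of the argument, and the step I anticipate will be the main obstacle, is the low-dimensional analysis for $K=2$ and $K=3$. For $K=2$, $\mathbf{X}_i - \mathbf{I}$ is a sum of two rank-one Hermitian terms on $\mathbb{C}^{2}$, so its spectrum is an explicit function of $c_{1i}, c_{2i}$ and the $2\times 2$ channel Gram matrix $[\qh_{j,i}^{H}\qh_{k,i}]_{j,k}$. I would then rule out $\mathbf{X}_i = \mathbf{0}$ (which would leave $\qW_i$ completely unconstrained by complementary slackness and allow the objective to shrink, contradicting optimality), forcing $\mbox{rank}(\mathbf{X}_i) \ge 1 = K-1$. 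For $K=3$, I would apply the same template to a rank-at-most-three perturbation of the $3\times 3$ identity and eliminate the pathological cases $\mbox{rank}(\mathbf{X}_i) \in \{0,1\}$ by combining (i) dual feasibility ($\mathbf{X}_i \succeq \mathbf{0}$), (ii) the sign restriction on $c_{ii}$, and (iii) the linear independence of the three channel vectors $\{\qh_{j,i}\}_{j=1}^{3}$, which holds almost surely under Rayleigh fading. The delicate issue is the sign handling of the $c_{ji}$: dual feasibility only guarantees $\mathbf{X}_i \succeq \mathbf{0}$ globally and not term by term, so a naive ``identity plus PSD'' rank lower bound is unavailable and the argument genuinely exploits the low-dimensional geometry available only when $K\in\{2,3\}$.

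Once every $\qW_i^{\star}$ is shown to be rank one, I would factor $\qW_i^{\star}=\qv_i^{\star}(\qv_i^{\star})^{H}$ and observe that $\{\qv_i^{\star}, \rho_i^{\star}\}$ is feasible for the original non-convex problem (\ref{eqn:prob:0}) and attains the SDP lower bound $\sum_i \mbox{trace}(\qW_i^{\star})$, which establishes simultaneously the exactness of the relaxation and the optimality of the recovered beamformers and power-splitting ratios.
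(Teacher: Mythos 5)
Your plan contains a genuine gap at its central step, and the gap is precisely where $K=3$ differs from $K=2$. You reduce the theorem to showing $\mbox{rank}(\mathbf{X}_i)\ge K-1$ for every optimal dual, i.e., that \emph{every} optimal $\qW_i^\star$ is rank-1. But the theorem (and the paper's proof) only asserts the \emph{existence} of a rank-1 optimal solution, and for $K=3$ the stronger statement you are after is not supported by the KKT structure: with $2K=6$ active constraints, the rank profile $(2,1,1)$ saturates the separable-SDP bound $K \le \sum_{i=1}^K (\mbox{rank}(\qW_i^\star))^2 \le 2K$ of Huang and Palomar that the paper invokes, so an optimal solution with one rank-2 block is entirely consistent with optimality; correspondingly the dual slack $\mathbf{X}_i = \mathbf{I}_3 + \sum_j c_{ji}\,\qh_{j,i}^*\qh_{j,i}^T$, whose cross coefficients are sign-indefinite as you note, can be positive semidefinite of rank 1. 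You cannot eliminate this by genericity of the channels, because the dual multipliers are endogenous -- they adapt to the primal data rather than being drawn from a continuous distribution. The paper's proof concedes exactly this case (at most one $\qW_i^\star$ of rank 2, by the squared-rank bound) and then handles it \emph{constructively}: by the Hermitian rank-one decomposition theorem of Huang--Ai--Zhang (Theorem 2.3, with Algorithm 3 giving the procedure), there exists $\qv_1^\star$ matching the four Hermitian forms $\|\qv_1^\star\|^2=\tr(\qW_1^\star)$ and $|\qh_{j,1}^T\qv_1^\star|^2 = \qh_{j,1}^T\qW_1^\star\qh_{j,1}^*$, $j=1,2,3$, so $\qv_1^\star{\qv_1^\star}^H$ is a \emph{different} optimal solution of rank 1. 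Your proposal has no mechanism for this purification step, and without it the $K=3$ argument cannot close.

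There is also a flaw in your $K=2$ argument as stated: ruling out $\mathbf{X}_i=\mathbf{0}$ on the grounds that complementary slackness would leave $\qW_i$ ``unconstrained'' is not valid -- the primal feasible set constrains $\qW_i$ regardless of the multiplier, and a zero multiplier merely signals an inactive conic constraint. The correct route is stationarity: $\mathbf{X}_i=\mathbf{0}$ forces $-c_{1i}\qh_{1,i}^*\qh_{1,i}^T - c_{2i}\qh_{2,i}^*\qh_{2,i}^T = \mathbf{I}_2$, which (for linearly independent channels) requires $\qh_{1,i}\perp\qh_{2,i}$, an event of probability zero -- but this delivers only an almost-sure conclusion, weaker than the paper's, which holds for every channel realization since $2 \le (\mbox{rank}(\qW_1^\star))^2+(\mbox{rank}(\qW_2^\star))^2 \le 4$ immediately forces both ranks to equal 1. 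In short, your KKT route is a genuinely different and natural attempt, but the existence-flavored tools the proof actually needs -- the squared-rank bound (itself proven by rank reduction, not multiplier counting) and the rank-one decomposition -- are exactly what your outline omits, and the $K=3$ case fails without them.
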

 \begin{proof}
    Suppose that $\{\qW_i^\star\}$ are the optimal solutions to (\ref{eqn:prob:sdp2}) and $\mbox{rank}(\qW_i^\star)\ge 1$ because $\qW_i^\star\ne \qzero$, then according to \cite[(24)]{Rank-Huang},
    we have the following results about the rank of the solutions to (\ref{eqn:prob:sdp2}) with $K$ QoS and EH constraints
    for a general $K$-user MIMO interference channel:
    \be \label{eqn:rank}
        K \le \sum_{i=1}^K (\mbox{rank}(\qW_i^\star))^2\le 2K.
    \ee
    When $K=2$, (\ref{eqn:rank}) becomes
    \be
       2 \le  (\mbox{rank}(\qW_1^\star))^2 + (\mbox{rank}(\qW_2^\star))^2\le 4,
    \ee
    from which we can derive that $\mbox{rank}(\qW_1^\star)=\mbox{rank}(\qW_2^\star)=1$, thus problem (\ref{eqn:prob:sdp2}) always gives rank-1 solutions when
    $K=2$.

    When $K=3$,  (\ref{eqn:rank}) is expanded as
       \be
       3 \le  (\mbox{rank}(\qW_1^\star))^2 + (\mbox{rank}(\qW_2^\star))^2+ (\mbox{rank}(\qW_3^\star))^2\le 6,
    \ee
   which shows that there is at most one $\qW_i^\star$ that is rank-2 and the other two $\qW_j^\star$s, $j\ne i$, must be rank-1. Without loss of generality, we
   assume that $\mbox{rank}(\qW_1^\star)=2, \mbox{rank}(\qW_2^\star)=1, \mbox{rank}(\qW_3^\star)=1$. Next we show that there exists $\qv_1^\star$ such that $\qv_1^\star{\qv_1^\star}^H$ is also an optimal solution to
   (\ref{eqn:prob:sdp2}). To this end, it suffices to prove that  there exists $\qv_1^\star$ that satisfies the following equations:
\begin{align}
&\tr(\qW_1^\star) = \|\qv_1^\star\|^2 \\
&\qh_{j,1}^T \qW_1^\star \qh_{1,j}^*  =   |\qh_{1,j}^T\qv_1^\star|^2,\;\;\;\;\;\text{for}\;j=1,2,3. \notag  
\end{align}

   It is proved in \cite[Theorem 2.3]{Decomposition-Huang} that the above decomposition indeed exists. Furthermore, Algorithm 3 in
   \cite{Decomposition-Huang} provides the detailed procedures about how to find $\qv_1^\star$ that satisfies the above equations. This completes the proof.
 \end{proof}

\section{Conventional MISO beamforming schemes}
\label{sec:beamformers}

Although, the developed SDP algorithm may provide optimal solutions, its high computational complexity make it unsuitable for providing real-time solutions. In this section, we briefly present standard beamforming designs that are well-known in conventional MISO systems that facilitate the development of low-complexity algorithms.  More specifically, we focus on ZF and MRT beamforming schemes, which represent extreme situations towards achieving the SINR and RF-EH constraints, respectively. We also study two beamforming schemes that attempt to balance the two extremes: RZF which does not fully cancel cross-interference,
and MRT-ZF, a hybrid scheme that considers a linear combination of ZF and MRT.

\subsection{Zero Forcing}
\label{sec:ZF}

In ZF, the weights are selected such that the co-channel interference
is canceled, i.e. for the desired user $i$ the ZF condition becomes
$\mathbf{h}_{i,j}^{T}\mathbf{w}_{j}=0$, $j\neq i$. More specifically, the ZF weights
are computed as the solution of the following optimization problem:
\begin{eqnarray*}
\max_{\mathbf{w}_{i}}|\mathbf{h}_{i,i}^{T}\mathbf{w}_{i}|^{2}\nonumber \\
\textrm{s. t.}\;\;\mathbf{H}_{i}^{T}\mathbf{w}_{i}=\mathbf{0}_{(K-1)\times1}\nonumber\\
 \;\;\;\;\;\; \|\mathbf{w}_{i}\|^2=1
\end{eqnarray*}
where $\mathbf{H}_{i}=[\mathbf{h}_{1,i},\ldots,\mathbf{h}_{i-1,i},\mathbf{h}_{i+1,i},\ldots,\mathbf{h}_{K,i}]^{T}$.
The solution of the beamforming is given by \cite{LEI}
\noindent
\begin{align}
\mathbf{w}_{i}^{\textrm{(ZF)}}=\frac{(\mathbf{I}_K-\mathbf{F}_i)\mathbf{h}_{i,i}^{*}}{\|(\mathbf{I}_K-\mathbf{F}_i)\mathbf{h}_{i,i}^{*}\|}
\end{align}
where $\mathbf{F}_i=\mathbf{H}_{i}^{\dagger}\mathbf{H}_{i}$ and $\mathbf{H}_{i}^{\dagger}=\mathbf{H}_{i}^{H}(\mathbf{H}_{i}\mathbf{H}_{i}^{H})^{-1}$
is the Moore-Penrose inverse of $\mathbf{H}_{i}$.

By fully canceling cross-interference, on the one hand the ZF beamformer optimizes the SINR constraints, but on the other hand it puts little emphasis on achieving the RF-EH constraints.

\subsection{Maximum Ratio Transmission}
\label{sec:MRT}

The MRT beamforming maximizes the signal-to-noise ratio (SNR) at each receiver ($|\mathbf{h}_{i,i}^T\mathbf{w}_{i}|^2/\sigma^2$) and requires only the knowledge of the direct links $\mathbf{h}_{i,i}$; due to this limited CSI knowledge, it is of low complexity and is suitable for practical applications with strict computational/time constraints.  The MRT beamforming is expressed as \cite{LAR}
\begin{align}
\mathbf{w}^{\textrm{(MRT)}}_i=\frac{\mathbf{h}_{i,i}^*}{\|\mathbf{h}_{i,i}^*\|}.
\end{align}
It is worth noting that the MRT does not take into account the simultaneous sources' transmissions and therefore it results in a strong cross-interference.  Although this cross-interference is a bottleneck for conventional MISO systems, it could be beneficial for scenarios with RF-EH constraints.

From the description of the beamforming schemes presented in Sections \ref{sec:ZF}
and \ref{sec:MRT}, it is clear that different beamformers result in different trade-offs
between the SINR and the RF-EH constraints. In the following, we consider two beamforming schemes that aim at balancing these two constraints.

\subsection{Regularized Zero Forcing (RZF)}
\label{sec:RZF}
The ZF beamforming scheme can be problematic when the channel is ill-conditioned; to remedy this limitation, RZF is proposed in \cite{RZF}, which aims to deal with this problem
as well as take into account the noise variance. The RZF beamforming vector for user $i$ admits the following expression
\begin{align}
\mathbf{w}_{i}^{\textrm{(RZF)}}=\frac{\left(\qG_i\qG^\dag_i + \eta_i \qI_K\right)^{-1}\mathbf{h}_{i,i}^{*}}{\|\left(\qG_i\qG^\dag_i + \eta_i
\qI_K\right)^{-1}\mathbf{h}_{i,i}^{*}\|},
\end{align}
where $\mathbf{G}_{i}=[\mathbf{h}_{1,i},\ldots,\mathbf{h}_{K,i}]^{T}$ and $\eta_i$ is the regularization parameter. It can be seen that RZF
beamforming does not completely cancel co-channel interference while $\eta_i$ controls interference to user $i$. Notice that RZF provides a better
trade-off between useful signal and interference than ZF because it allows controlled interference which is desired for RF-EH constraints. Ideally
$\eta_i$ should be optimized, however, this is a nontrivial task \cite{RZF2} which requires the consideration of many factors like channel fading
characteristics, antenna correlation, design objectives, etc. In this paper, to keep the complexity low, we use the same constant regularization
parameters for all users, i.e., $\eta_i=c,\forall i$.

\subsection{Hybrid Maximum Ratio Transmission - Zero-Forcing beamforming}

Another potentially good solution could result from the linear combination of
MRT and ZF into a hybrid MRT-ZF scheme, which attempts to find the
best trade-off of the two. A linear combination of MRT and ZF beamformers was shown to provide a complete description of the Pareto boundary for the achievable rate of MISO interference channel for the two-user case \cite{LAR}. The purpose in this section is to combine the two beamformers to achieve a good trade-off between the SINR and the RF-EH constraints. The MRT-ZF beamforming can be expressed as:
\begin{equation}
\mathbf{w}_{i}^{\text{(MRT-ZF)}}=\frac{\sqrt{x_{i}}\mathbf{r}_{i}^{\text{(MRT)}}+\sqrt{y_{i}}\mathbf{r}_{i}^{\text{(ZF)}}}{\|\sqrt{x_{i}}\mathbf{r}_{i}^{\text{(MRT)}}+\sqrt{y_{i}}\mathbf{r}_{i}^{\text{(ZF)}}\|},
\label{eq:weightsMRTZF}
\end{equation}
\noindent where $\mathbf{r}_{i}^{\text{(MRT)}}=\mathbf{h}_{i,i}^{*}$, $\mathbf{r}_{i}^{\text{(ZF)}}=(\mathbf{I}_K-\mathbf{F}_i)\mathbf{h}_{i,i}^{*}$, $\mathbf{F}_i$ as defined in ZF beamforming, while $x_{i},y_{i}\ge0$ are decision variables that need to be chosen to achieve an optimal trade-off between the two beamforming schemes. Note that for $y_i=0$ ($x_i=0$) this beamformer is equivalent to the MRT (ZF) beamformer.

\section{Problem solution for fixed beamforming}

In this section we propose solutions to the considered problem for the beamforming schemes described in Section \ref{sec:beamformers}.
In ZF beamforming, the optimization problem attains a special form that allows an optimal closed-form solution. In MRT and RZF beamforming, the optimization problem does not have any special form, but the problem can be transformed into a second-order cone programming (SOCP) formulation which leads to optimal, robust and fast solutions using off-the-shelf optimization solvers. The proposed solution method applies to any arbitrary beamforming scheme with fixed weights $\mathbf{w}_i^f$. In MRT-ZF beamforming, the optimal contribution of ZF and MRT beamforming has to be found; although the resulting problem is not convex, we develop an approximate solution based on SOCP which attains excellent results in practice.

\subsection{Problem solution for ZF beamforming}
\label{sec:ZFsol}

Letting $G_{i,j}=|\mathbf{h}_{i,j}^{T}\mathbf{w}_{j}|^2$
denote the link gain between $S_{i}$ and $D_{j}$, the SINR and the total received power at
the $i$-th receiver are simplified because $G_{i,j}=0$, for $i \neq j$. Hence the original optimization problem
simplifies into the following formulation:
\begin{subeqnarray} \label{eq:ZF} \slabel{eq:ZF1}
{\cal ZF}:~~~ \displaystyle{\min_{\mathbf{P}, \mbox{\boldmath $\rho$}}} && \sum_{i=1}^{K}P_{i}\\
\slabel{eq:ZF2} \textrm{s.t.} \frac{\rho_{i}G_{i,i} P_{i}}{\rho_i \sigma^2 + \sigma_C^2} & \geq & \gamma_{i}, \forall i\\
\slabel{eq:ZF3}(1-\rho_{i})(G_{i,i}P_i+\sigma^2) & \geq & \lambda_{i}, \forall i\\
\slabel{eq:ZF4} P_{i} \ge 0,&&0 \le \rho_{i} \le 1, \forall i.
\vspace{-0.5cm}
\end{subeqnarray}

Its solution is given in Theorem 2 below.
\begin{theorem}
Let $\alpha_i=(\gamma_{i}+1)\sigma^{2}$ and $\beta_i=\gamma_{i}\sigma_{C}^{2}$.
A feasible solution to optimization problem ${\cal ZF}$ always exists,
while its optimal solution can be expressed in closed-form as:
\end{theorem}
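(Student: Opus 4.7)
The plan is to exploit that ZF beamforming zeroes out all cross-interference, i.e., $G_{i,j}=0$ for $i\neq j$, so the constraints in $\mathcal{ZF}$ are decoupled across users. Since the objective $\sum_i P_i$ and the variable sets $(P_i,\rho_i)$ are also separable, I would reduce $\mathcal{ZF}$ to $K$ identical single-user subproblems and solve each one in closed form.

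For a generic user $i$, the SINR constraint rearranges to a lower bound $P_i\ge f_1(\rho_i):=\tfrac{\gamma_i\sigma^2}{G_{i,i}}+\tfrac{\gamma_i\sigma_C^2}{\rho_i G_{i,i}}$ that is strictly decreasing on $(0,1)$, whereas the EH constraint gives $P_i\ge f_2(\rho_i):=\tfrac{1}{G_{i,i}}\bigl(\tfrac{\lambda_i}{1-\rho_i}-\sigma^2\bigr)$ that is strictly increasing on $(0,1)$ with $f_2\to+\infty$ as $\rho_i\to 1^-$. Minimising $\max\{f_1,f_2\}$ over $\rho_i$ therefore amounts to locating the unique crossing point at which both constraints are simultaneously tight; this is the crux of the argument and I would state it as a brief monotonicity observation before proceeding.

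Setting $f_1(\rho_i)=f_2(\rho_i)$ and clearing denominators, using the abbreviations $\alpha_i=(\gamma_i+1)\sigma^2$ and $\beta_i=\gamma_i\sigma_C^2$, produces the quadratic
\begin{equation*}
\alpha_i\rho_i^2+(\beta_i+\lambda_i-\alpha_i)\rho_i-\beta_i=0,
\end{equation*}
whose constant term and leading coefficient have opposite signs, so the product of its roots is $-\beta_i/\alpha_i<0$. Hence exactly one root is positive, namely
\begin{equation*}
\rho_i^{\star}=\frac{(\alpha_i-\beta_i-\lambda_i)+\sqrt{(\alpha_i-\beta_i-\lambda_i)^2+4\alpha_i\beta_i}}{2\alpha_i},
\end{equation*}
and substitution back into $f_1$ (or equivalently $f_2$) delivers the closed-form expression for $P_i^{\star}$.

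The step I expect to be the main obstacle, and where a careful check is required, is verifying that $\rho_i^{\star}$ actually lies strictly inside $(0,1)$ so that the crossing is interior and the decoupled problem is feasible. After isolating the surd and squaring, $\rho_i^{\star}<1$ reduces to the elementary inequality $\lambda_i>0$, which holds by assumption on the EH threshold. The constructive pair $(P_i^{\star},\rho_i^{\star})$ is therefore feasible for every user, which settles simultaneously the feasibility claim for $\mathcal{ZF}$ and the closed-form optimality of the proposed solution.
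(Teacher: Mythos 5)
Your proof is correct, and it reaches the paper's closed form by a genuinely different route. Both arguments start from the same decomposition---ZF kills all cross terms ($G_{i,j}=0$, $i\neq j$), so $\mathcal{ZF}$ splits into $K$ independent single-user problems---but from there the paper substitutes the received power $x_i = G_{i,i}P_i+\sigma^2$, argues that at least one constraint must bind at optimality, shows that either binding choice leads to the same quadratic inequality $x_i^2-(\alpha_i+\beta_i+\lambda_i)x_i+\alpha_i\lambda_i\ge 0$, and then locates the admissible root by proving $x_1<\max(\lambda_i,\alpha_i+\beta_i)<x_2$ via a two-case analysis on whether $\lambda_i$ or $\alpha_i+\beta_i$ is larger. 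You instead parametrize by $\rho_i$, writing $P_i\ge\max\{f_1(\rho_i),f_2(\rho_i)\}$ with $f_1$ strictly decreasing and blowing up at $0^{+}$, and $f_2$ strictly increasing and blowing up at $1^{-}$; the unique interior crossing is then automatically the minimizer, its existence in $(0,1)$ is immediate from the limit behavior, and you get feasibility and the fact that \emph{both} constraints bind essentially for free, replacing the paper's root-ordering case analysis. The two quadratics are equivalent: your discriminant $(\alpha_i-\beta_i-\lambda_i)^2+4\alpha_i\beta_i$ equals the paper's $(\alpha_i+\beta_i+\lambda_i)^2-4\alpha_i\lambda_i$, and rationalizing the paper's $\rho_i^{*}=\beta_i/(x_2-\alpha_i)$ (multiply numerator and denominator by $\sqrt{\Delta}-(\beta_i+\lambda_i-\alpha_i)$, using $\Delta-(\beta_i+\lambda_i-\alpha_i)^2=4\alpha_i\beta_i$) recovers exactly your $\rho_i^{\star}$; your interiority check ($\rho_i^{\star}<1\Leftrightarrow\lambda_i>0$, and $\rho_i^{\star}>0$ from the negative root product) is also sound. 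What each approach buys: your monotone-crossing argument is the cleaner way to establish feasibility and uniqueness, while the paper's $x_i$-substitution delivers $P_i^{*}$ directly in the form stated in the theorem and exhibits the two equivalent expressions for $\rho_i^{*}$ as the two binding constraints---whereas you leave the final expression for $P_i^{\star}$ implicit ("substitution back into $f_1$"), so to match the stated theorem you would need one more line showing $G_{i,i}P_i^{\star}+\sigma^2$ equals the larger root $x_2$. Two cosmetic points: strictly you minimize $\max\{f_1,f_2,0\}$ because of the constraint $P_i\ge 0$, but $f_1>0$ on $(0,1)$ so nothing changes; and the monotonicity of $f_1$ uses $\beta_i=\gamma_i\sigma_C^2>0$, the same standing assumption $\gamma_i,\lambda_i>0$ the paper invokes.
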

\begin{align*}
P_{i}^{*} &=\frac{1}{G_{i,i}}\left(\frac{\alpha_i+\beta_i+\lambda_{i}+\sqrt{(\alpha_i+\beta_i+\lambda_{i})^{2}-4\lambda_{i}\alpha_i}}{2}-\sigma^{2}\right),\\
\rho_{i}^{*} &= \frac{\beta_i}{G_{i,i}P_{i}^{*}+\sigma^{2}-\alpha_i}=1-\frac{\lambda_{i}}{G_{i,i}P_{i}^{*}+\sigma^{2}}.
\vspace{-0.5cm}
\end{align*}
\begin{proof} The proof of {\it Theorem 2} is provided in Appendix A. \end{proof}
Theorem  is very important as it demonstrates that there is always a feasible solution to optimization problem \eqref{initFormulation}, despite the presence of QoS and RF-EH constraints, no matter how demanding these constraints are. 

\subsection{Problem solution for MRT and RZF beamforming}

Contrary to the special case of ZF beamforming where $G_{i,j} \neq 0$, $\forall~i=j$, that allowed the decomposition of the problem, in MRT or RZF beamforming we have that $G_{i,j} \neq 0$, $\forall i,~j$. Hence, although  MRT or RZF have different link gain values, they share the same problem structure which results in the same formulation. Rearranging the terms in formulation \eqref{initFormulation} yields:
\begin{subeqnarray} \label{eq:OFW} \slabel{eq:OFW1}
\vspace{-0.5cm}
{\cal FW}:~~~ \displaystyle{\min_{\mathbf{P}, \mbox{\boldmath $\rho$}}} && \sum_{i=1}^{K}P_{i}\\
\slabel{eq:OFW2} \textrm{s.t.} (1+\gamma_{i}) G_{i,i} P_{i} & \geq & \frac{1}{\rho_i}\gamma_{i}\sigma_C^{2} + \gamma_{i} P^r_{i}, \forall i\\
\slabel{eq:OFW3} (1-\rho_{i})P^r_{i} & \geq & \lambda_{i}, \forall i\\
\slabel{eq:OFW4}{\displaystyle \sum_{j=1}^{K}}G_{i,j}P_{j} +\sigma^2 & = & P^r_{i}, \forall i \\
\slabel{eq:OFW5} P_{i} \ge 0,~&& 0 \le \rho_{i} \le 1, \forall i.
\end{subeqnarray}
Optimization problem $\cal{FW}$ is convex because it is comprised of a linear objective function and convex constraints. Constraint (\ref{eq:OFW2}) is convex because the term $\frac{1}{\rho_i}\gamma_{i}\sigma_C^{2}$ is convex for $\rho_i>0$ and the other terms are linear, (\ref{eq:OFW3}) is a restricted hyperbolic constraint and (\ref{eq:OFW4}) is linear.  Note that by solving problem $\cal{FW}$ we obtain optimal values for the splitting parameters, as well as an optimal power allocation for \emph{any} set of fixed beamforming vectors. Next we show how the problem can be cast into a SOCP formulation, which can be optimally and reliably solved using off-the-shelf algorithms.

\subsubsection*{SOCP Solution}

SOCP problems are convex optimization problems involving a linear function minimized subject to linear and second-order cone (SOC) constraints, which can be solved using fast and robust of-the-shelf solvers \cite{SOCP1}. Among the constraints that can be modeled using SOCs are the \textit{restricted hyperbolic constraints} which have the form: $\mathbf{x}^T \mathbf{x}\le yz$, which are equivalent to a rotated SOC constraint of the form:
\[
\left\Vert \left(\begin{array}{c}
2 \mathbf{x}\\
y-z
\end{array}\right)\right\Vert \le y+z, 
\]
where $\mathbf{x}\in \mathbb{C}^{N \times 1}$, $y,z \ge 0$. For example, constraint \eqref{eq:OFW3} is equivalent to the following SOC:
\[
\left\Vert \left(\begin{array}{c}
2 \sqrt{\lambda_i}\\
(1-\rho_i)-P^r_i
\end{array}\right)\right\Vert \le (1-\rho_i)+P^r_i.
\]
In order to cast problem $\cal{FW}$ into SOCP form we need to convert \eqref{eq:OFW2} into a restricted hyperbolic constraint. Let us define $z_i$=$(1+\gamma_{i}) G_{i,i} P_{i} - \gamma_{i} P^r_{i}$; it must hold  true that $z_i \ge 0$ since $\frac{1}{\rho_i}\gamma_{i}\sigma_C^{2}\ge 0$ otherwise \eqref{eq:OFW2} will be infeasible. Substituting $z_i$ into \eqref{eq:OFW2} yields that $\rho_i z_i\ge \gamma_{i}\sigma_C^{2}$ which is a convex restricted hyperbolic constraint. Hence, problem $\cal{FW}$ can be cast into the following convex problem which is equivalent to an SOCP formulation:
\begin{subeqnarray} \label{eq:SOCP} \slabel{eq:SOCP1}
\vspace{-0.5cm}
{\cal SOCP}:~~~ \displaystyle{\min_{\mathbf{P}, \mathbf{z}, \mbox{\boldmath $\rho$}}} && \sum_{i=1}^{K}P_{i}\\
\slabel{eq:SOCP2} \textrm{s.t.    } z_i \rho_i & \geq & \gamma_{i}\sigma_C^{2}, \forall i\\
\slabel{eq:SOCP3} z_i + \gamma_{i} P^r_{i}&=& (1+\gamma_{i}) G_{i,i} P_{i} , \forall i\\
\slabel{eq:SOCP4} (1-\rho_{i})P^r_{i} & \geq & \lambda_{i}, \forall i\\
\slabel{eq:SOCP5}{\displaystyle \sum_{j=1}^{K}}G_{i,j}P_{j} +\sigma^2 & = & P^r_{i}, \forall i \\
\slabel{eq:SOCP6} P_{i} \ge 0,~& &z_i\ge0,~0 \le \rho_{i} \le 1, \forall i.
\end{subeqnarray}
We should emphasize here that formulation \eqref{eq:SOCP} is general enough to optimize the power allocation and power splitting parameters for any beamforming scheme with fixed weights.  Formulation \eqref{eq:SOCP} can also be easily extended to include transmit power constraints by adding the bound constraints $ 0 \le P_{i} \le P_i^{max}$, $\forall i$.

\vspace{-0.3cm}
\subsection{Problem solution for MRT-ZF beamforming}

Assuming unnormalized weights $\mathbf{v}_i^{\text{(MRT-ZF)}}=\sqrt{P_i} \mathbf{w}_i^{\text{(MRT-ZF)}}$, the link gains $G_{i,j}$ resulting from the MRT-ZF beamformer are:
\begin{equation}
G_{i,j}=|\mathbf{h}_{i,j}^{T} \mathbf{v}_j^{\text{(MRT-ZF)}}|^{2}=\begin{cases}
x_{j}|\mathbf{h}_{i,j}^{T}\mathbf{h}_{j,j}^{*}|^{2}, & i\neq j\\
|\sqrt{x_{i}}\mathbf{h}_{i,i}^{T}\mathbf{h}_{i,i}^{*}+\sqrt{y_{i}}\mathbf{h}_{i,i}^{T}(\mathbf{I}_K-\mathbf{F}_i)\mathbf{h}_{i,i}^{*}|^{2}, & i=j.
\end{cases}\label{eq:chGainsMRTZF1}
\end{equation}

Substituting $Q_{i,j}=|\mathbf{h}_{i,j}^{T}\mathbf{h}_{j,j}^{*}|$,
$q_{i}=\mathbf{h}_{i,i}^{T}(\mathbf{I}_K-\mathbf{F}_i)\mathbf{h}_{i,i}^{*}\ge0$
and $s_{i}=\sqrt{x_{i}y_{i}}\ge0$ into \eqref{eq:chGainsMRTZF1} we obtain:
\begin{equation}
G_{i,j}=|\mathbf{h}_{i,j}^{T}\mathbf{v}_{j}|^{2}=\begin{cases}
x_{j}Q_{i,j}^{2}, & j\neq i\\
x_{i}Q_{i,i}^{2}+y_{i}q_{i}^{2}+2s_{i}Q_{i,i}q_{i}, & j=i.
\end{cases}\label{eq:chGainsMRTZF2}
\end{equation}

The above expression is true because for $i\neq j$, there is no contribution
to the link gains from the ZF component as $\mathbf{h}_{ij}^{T}\mathbf{r}_{i}^{\text{(ZF)}}=0$,
while for $i=j$ both terms in the norm are real and positive as $\mathbf{h}_{i,i}^{T}\mathbf{h}_{i,i}^{*}=|\mathbf{h}_{i,i}^{T}\mathbf{h}_{i,i}^{*}|=Q_{i,i}\ge0$
and $q_{i}\ge0$. The above equation indicates that the link gains $G_{i,j}$ can be written as a linear combination of variables $x_{i},\, y_{i},\, s_{i}$.
In addition, the power $P_{i}$ associated with a particular beamforming
vector $\mathbf{v}_i^{\text{(MRT-ZF)}}$ can be expressed as:
\begin{align}
P_{i} & =\|\mathbf{v}_i^{\text{(MRT-ZF)}}\|_{2}^{2}=(\sqrt{x_{i}}\mathbf{r}_{i}^{\text{(MRT)}}+\sqrt{y_{i}}\mathbf{r}_{i}^{\text{(ZF)}})^{H}(\sqrt{x_{i}}\mathbf{r}_{i}^{\text{(MRT)}}+\sqrt{y_{i}}\mathbf{r}_{i}^{\text{(ZF)}})\label{eq:powerMRTZF}\\
 & =x_{i}p_{x,i}+y_{i}p_{y,i}+s_{i}p_{s,i},\nonumber
\end{align}

\noindent where $p_{x,i}=\|\mathbf{r}_{i}^{\text{(MRT)}}\|^2=Q_{i,i}$,
$p_{y,i}=\|\mathbf{r}_{i}^{\text{(ZF)}}\|^2=q_{i}$ and
$p_{s,i}=(\mathbf{r}_{i}^{\text{(ZF)}})^{H}(\mathbf{r}_{i}^{\text{(MRT)}})+(\mathbf{r}_{i}^{\text{(MRT)}})^{H}(\mathbf{r}_{i}^{\text{(ZF)}})=2q_{i}$.
Substituting Eqs. \eqref{eq:chGainsMRTZF2} and \eqref{eq:powerMRTZF}
into problem \eqref{initFormulation} yields:
\begin{eqnarray}
{\displaystyle \min_{\mathbf{x},\mathbf{y},\mathbf{s},\boldsymbol \rho}} &  & {\displaystyle \sum_{i=1}^{K}}(x_{i}Q_{i,i}+y_{i}q_{i}+2s_{i}q_{i}) \label{eq:formulationMRTZF1} \\
x_{i}Q_{i,i}^{2}+y_{i}q_{i}^{2}+2s_{i}Q_{i,i}q_{i} & \ge & \gamma_{i}\left(\sigma^{2}+{\displaystyle \sum_{j\neq i}}x_{j}Q_{i,j}^{2}\right)+\frac{\gamma_{i}\sigma_{C}^{2}}{\rho_{i}},\,\forall i \nonumber \\
(1-\rho_{i})P_{i}^{r} & \geq & \lambda_{i},\,\forall i \nonumber \\
{\displaystyle \sum_{j=1}^{K}} x_{j}Q_{i,j}^{2} + y_{i}q_{i}^{2} + 2s_{i}Q_{i,i}q_{i}+\sigma^{2} & = & P_{i}^{r},\,\forall i \nonumber \\
s_{i} & = & \sqrt{x_{i}y_{i}},\,\forall i \nonumber \\
0\le\rho_{i}\le1, &  & \, x_{i}\ge0,\, y_{i}\ge0,\, s_{i}\ge0,\,\forall i. \nonumber
\end{eqnarray}

Formulation \eqref{eq:formulationMRTZF1} is not convex due to the constraint
$s_{i}=\sqrt{x_{i}y_{i}}$. Convexification can be achieved by relaxing
this constraint into $s_{i}\le\sqrt{x_{i}y_{i}}$ or $s_{i}^{2}\le x_{i}y_{i}$,
which is a convex second-order cone as $x_{i}\ge0$ and $y_{i}\ge0$.
The relaxed formulation can be easily converted into the following
convex SOCP:
\vspace{-0.3cm}
\begin{eqnarray}
{\displaystyle \min_{\mathbf{x},\mathbf{y},\mathbf{s},\boldsymbol \rho}} &  & {\displaystyle \sum_{i=1}^{K}}(x_{i}Q_{i,i}+y_{i}q_{i}+2s_{i}q_{i}) \label{eq:formulationMRTZF2} \\
z_{i}\rho_{i} & \ge & \left(\sqrt{\gamma_{i}\sigma_{C}^{2}}\right)^{2},\,\forall i \nonumber \\
(1-\rho_{i})P_{i}^{r} & \geq & \left(\sqrt{\lambda_{i}}\right)^{2},\,\forall i \nonumber \\
{\displaystyle \sum_{j=1}^{K}} x_{j} Q_{i,j}^{2} + y_{i}q_{i}^{2}+2s_{i}Q_{i,i}q_{i}+\sigma^{2} & = & P_{i}^{r},\,\forall i \nonumber \\
x_{i}y_{i} & \ge & s_{i}^{2},\,\forall i \nonumber \\
z_{i}+\gamma_{i}\sigma^{2} + \gamma_{i}{\displaystyle \sum_{j\neq i}}x_{j}Q_{i,j}^{2} & = & x_{i}Q_{i,i}^{2}+y_{i}q_{i}^{2}+2s_{i}Q_{i,i}q_{i},\,\forall i \nonumber \\
0\le\rho_{i}\le1,\, z_{i}\ge0, &  & \, x_{i}\ge0,\, y_{i}\ge0,\, s_{i}\ge0,\,\forall i.\nonumber
\end{eqnarray}
Because problem \eqref{eq:formulationMRTZF2} provides a lower bound approximation to \eqref{eq:formulationMRTZF1}, the solution obtained may not be valid for problem \eqref{initFormulation}. To deal with this issue, we use the decision vectors $\mathbf{x}$ and $\mathbf{y}$ obtained from the solution of \eqref{eq:formulationMRTZF2} to construct fixed, normalized weights $\mathbf{w}_i^{\text{(MRT-ZF)}}$ according to Eq. \eqref{eq:weightsMRTZF}, and then solve problem \eqref{eq:SOCP} to obtain a valid solution to \eqref{initFormulation}. In case the latter provides an infeasible solution, ZF beamforming is employed.

\vspace{-0.3cm}
\subsection{Implementation issues}
Most of the proposed schemes (except ZF beamforming) refer to global optimization problems that involve the centralized knowledge of all the downlink channels. A potential implementation requires a central processing unit/controller, which collects all the downlink channels from the transmitters and then communicates the optimal parameters (beamforming vectors, transmit powers, power split factors) to the system; similar centralized implementations have been proposed in \cite{HUA} for basic MISO interference channels. Although this solution corresponds to a high complexity and signaling overhead, modern cellular communication systems introduce base-station cooperation and provide a centralized back-haul network for sophisticated precoding; in our case, we have a coordination at the beamforming level and transmit signals remain locally known at the transmitters.  On the other hand, the purpose of this work is to introduce a new network structure (MISO interference channel with QoS and EH constraints) and study its optimal performance in terms of total energy consumption; implementations issues are beyond the scope of this work. The formulated optimization problems provide useful theoretical bounds and serve as guidelines for the evaluation of practical (distributed) implementations. The design of distributed algorithms that solve the optimization problems based on a local channel knowledge at each transmitter is an interesting problem for future work \cite{QIU}.

\vspace{-0.2cm}
\section{Numerical Results}

We evaluate the performance of the developed algorithms by solving several randomly generated problems for different parameter configurations. All problem instances follow the system model in Section \ref{system_model} with $\sigma^2=\sigma_C^2 =-40$ dBm. We assume that the attenuation from the sources to the receivers is $50$ dB for the direct channels and $\delta\cdot (50 \text{dB})$ with $\delta>0$ for the indirect channels\footnote{It corresponds to a symmetric topology with $\sigma_{ch_{i,i}}^2=10^{-5}$ and $\sigma_{ch_{i,j}}^2=10^{-5}/\delta$ for $i\neq j$, with $i=1,\ldots,K$ and $j=1,\ldots,N$.} (a simple method to ensure that the direct links are stronger than the interference links \cite{TAS}). The channel vectors are randomly generated from independent and identically distributed Rayleigh fading with average power as specified from the attenuation of the particular channel. For simplicity, the detection and energy harvesting thresholds are assumed to be equal for all users ($\gamma_i=\gamma$ and $\lambda_i=\lambda, \forall i$) and that the conversion efficiency is equal to $\zeta_i=1$. Mathematical modeling and solution of the SOCP formulations for fixed beamforming was done using the Gurobi optimization solver \cite{gurobi}, while the modeling and solution of the SDP relaxation for variable beamforming weights was performed using CVX \cite{cvx}.

\vspace{-0.3cm}
\subsection{Beamforming schemes performance comparison}

\begin{table}
    \centering
	    \caption{Infeasibility and optimality results for different parameter configurations for $\delta=20$.}
	    \vspace{-0.3cm}
	    \label{tab:resOI}
        \begin{tabular}{|c| c| c| c |c| c |c| c|c|c|c|}
            \hline
            \multicolumn{3}{|c|}{\textbf{Parameters}} & \multicolumn{3}{c|}{\textbf{Infeasibility}} & \multicolumn{5}{c|}{\textbf{Optimality}} \\
            \hline
            $K$ &   $\gamma$(dB)    &   $\lambda$ (dBm) &  $MRT$    &   $LP_{\rho}$ &   $RZF$ & $\frac{f_{MRT}}{f_{Opt}}$ & $\frac{f_{LP_{\rho}}}{f_{Opt}}$ & $\frac{f_{ZF}}{f_{Opt}}$ & $\frac{f_{RZF}}{f_{Opt}}$    & $\frac{f_{MRT-ZF}}{f_{Opt}}$         \\
            \hline
            \hline
                        2   &   20  &   -40 &   70  &   70  &   84  &   3.4863  &   5.2258  &   5.587   &   12.5926 &   1.0013  \\ \hline
                        2   &   20  &   -30 &   71  &   71  &   82  &   3.2814  &   4.8858  &   1.7049  &   3.2956  &   1.0018  \\ \hline
                        2   &   20  &   -20 &   76  &   76  &   87  &   2.3974  &   3.2925  &   2.6637  &   9.7787  &   1.0012  \\ \hline
                        2   &   20  &   -10 &   77  &   77  &   87  &   1.9305  &   2.7745  &   2.4875  &   3.3039  &   1.0007  \\ \hline
                        2   &   10  &   -40 &   2   &   2   &   20  &   1.2156  &   1.7927  &   4.1427  &   4.0098  &   1.0033  \\ \hline
                        2   &   10  &   -30 &   0   &   0   &   9   &   1.2496  &   1.5891  &   4.6628  &   2.9024  &   1.0028  \\ \hline
                        2   &   10  &   -20 &   3   &   3   &   16  &   1.0194  &   1.7411  &   5.9004  &   7.3744  &   1.0057  \\ \hline
                        2   &   10  &   -10 &   2   &   2   &   22  &   1.0655  &   2.0582  &   3.321   &   2.7239  &   1.0074  \\ \hline
                        4   &   20  &   -40 &   100 &   100 &   100 &   -   &   -   &   4.8337  &   -   &   1.7032  \\ \hline
                        4   &   20  &   -30 &   100 &   100 &   100 &   -   &   -   &   8.2793  &   -   &   1.6311  \\ \hline
                        4   &   20  &   -20 &   100 &   100 &   100 &   -   &   -   &   4.3944  &   -   &   1.5453  \\ \hline
                        4   &   20  &   -10 &   100 &   100 &   100 &   -   &   -   &   11.1585 &   -   &   1.7093  \\ \hline
                        4   &   10  &   -40 &   1   &   1   &   67  &   1.2929  &   1.9129  &   14.0505 &   37.641  &   1.1038  \\ \hline
                        4   &   10  &   -30 &   1   &   1   &   67  &   1.299   &   1.6987  &   9.3093  &   13.7661 &   1.0884  \\ \hline
                        4   &   10  &   -20 &   1   &   1   &   69  &   1.1137  &   1.7666  &   21.0165 &   41.9527 &   1.0261  \\ \hline
                        4   &   10  &   -10 &   1   &   1   &   66  &   1.0235  &   1.9934  &   41.7677 &   5.3096  &   1.0141  \\ \hline
                        8   &   20  &   -40 &   100 &   100 &   100 &   -   &   -   &   11.0335 &   -   &   2.879   \\ \hline
                        8   &   20  &   -30 &   100 &   100 &   100 &   -   &   -   &   9.9038  &   -   &   2.9878  \\ \hline
                        8   &   20  &   -20 &   100 &   100 &   100 &   -   &   -   &   14.6228 &   -   &   2.8204  \\ \hline
                        8   &   20  &   -10 &   100 &   100 &   100 &   -   &   -   &   15.899  &   -   &   2.9969  \\ \hline
                        8   &   10  &   -40 &   0   &   0   &   100 &   1.3704  &   2.0311  &   42.8582 &   -   &   1.2174  \\ \hline
                        8   &   10  &   -30 &   0   &   0   &   100 &   1.2703  &   1.6751  &   31.3244 &   -   &   1.1707  \\ \hline
                        8   &   10  &   -20 &   0   &   0   &   100 &   1.0439  &   1.6402  &   259.085 &   -   &   1.0306  \\ \hline
                        8   &   10  &   -10 &   0   &   0   &   100 &   1.0057  &   1.9607  &   48.9879 &   -   &   1.0057  \\ \hline
        \end{tabular}
\end{table}

To examine the relative performance, in terms of  optimality and infeasibility, of the beamforming approaches we solved several problems instances for different parameter configurations with $\delta=5$. The results are summarized on Table \ref{tab:resOI}, where each table entry is the average of $100$ randomly generated instances. In the table, $LP_\rho$ corresponds to the solution of the linear program obtained when the power splitting parameters are fixed and equal to $\rho_i=0.5$, $\forall i$, while ZF, MRT, RZF and MRT-ZF correspond to solutions associated with the particular beamforming schemes. Note that all optimality results are illustrated relative to the optimal solution ($f_{Opt}$) of \eqref{initFormulation}. This was obtained from the solution of the SDP problem (\ref{eqn:prob:sdp2}), as it provided rank-1 solutions in all instances considered. Note also that the hyphen symbol ``-'' in the table is used when no feasible solutions where obtained from a particular beamforming approach to be able to present optimality results.

Columns 4-6 show the percentage of infeasible instances for the different beamforming approaches. As expected, ZF, MRT-ZF and optimal beamforming exhibited no infeasible problem instances and are omitted from the table. Beamforming schemes MRT,  $\text{LP}_\rho$ and RZF exhibit a large number of infeasible solutions, especially when the SINR threshold is large ($\gamma=20$ dB). Comparatively, MRT and $\text{LP}_\rho$ exhibit exactly the same number of infeasible problem for all cases considered which indicates the structure of the weight vectors is more important than the power splitting parameter. Additionally, these schemes produce considerably more feasible problems compared to the RZF approach with $\eta_i = 1$. It seems the feasibility solely depends on the SINR but not the EH constraints.

Columns 7-11 indicate the optimality performance of different beamforming schemes with respect to the optimal solution. Comparing the optimality results of MRT and $\text{LP}_\rho$, it is easy to see that when the power splitting parameters are fixed ($\text{LP}_\rho$ approach) the required power is 50\%-100\% more than the case that these parameters are optimally selected (MRT approach). MRT also significantly outperforms RZF which indicates that the latter is not appropriate for the particular problem, at least for the considered scenaria. MRT also outperforms ZF in almost all cases for feasible MRT instances, because it produces strong cross-interference which facilitates the EH constraints. The above results demonstrate that from the fixed weight schemes, MRT and ZF are the best in terms of optimallity and feasibility respectively. By combining these schemes, MRT-ZF always produces feasible solutions and exhibits excellent performance, especially for small $K$. For $K=8$, the performance of MRT-ZF worsens but still for some parameter combinations its performance is excellent. It appears from the table that the most important parameter for the relative performance between MRT-ZF and optimal beamforming is the SINR threshold; when the SINR threshold is 20dB the relative performance of MRT-ZF compare to optimal beamforming is much worse than when SINR is $10$ dB. Regarding the optimal beamforming scheme, the SDP relaxation algorithm proposed in section \ref{sec:optBeamforming} produced rank-1 solutions in all simulations, indicating that its solution is optimal for all considered instances.

\begin{figure}[t]
\centering
\includegraphics[width=0.5\linewidth]{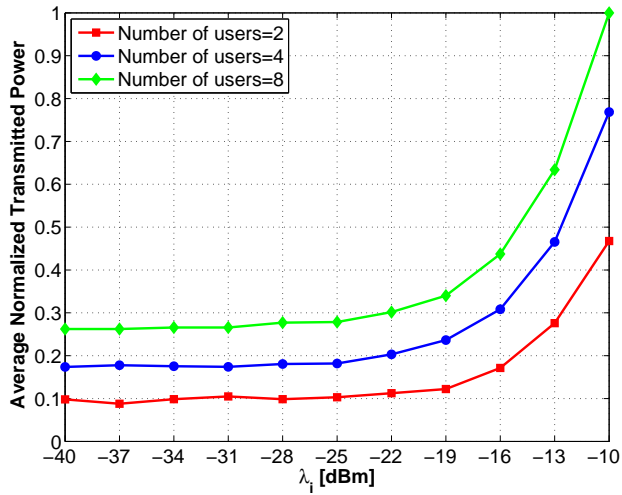}
\vspace{-0.8cm}
\caption{Total transmitted power versus EH threshold; $\gamma=20$ dB and $\delta=5$.}
\label{fig:powerVSeh}%
\end{figure}

\begin{figure}[t]
\centering
\includegraphics[width=0.5\linewidth]{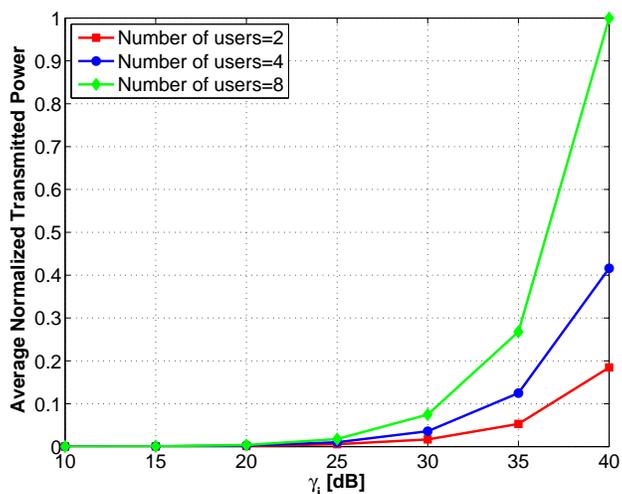}
\vspace{-0.8cm}
\caption{Total transmitted power versus SINR threshold; $\lambda=-30$ dBm and $\delta=5$.}
\label{fig:powerVSsnr}
\end{figure}

\begin{figure}[t]
\centering
\includegraphics[width=0.5\linewidth]{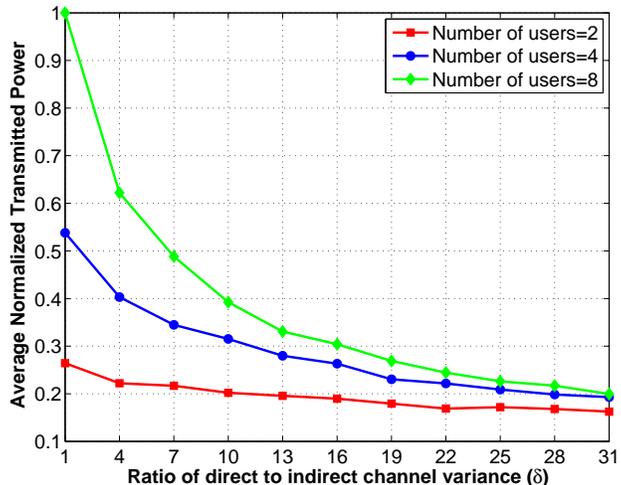}
\vspace{-0.8cm}
\caption{Total transmitted power versus the ratio of direct to indirect channel variance; $\lambda=-30$ dBm and $\gamma=20$ dB.}
\label{fig:powerVSdss}%
\end{figure}

\vspace{-0.1cm}
\subsection{Parameter effect}

Apart from the performance of the developed algorithms for the solution of problem \eqref{initFormulation}, the effect of parameters $\gamma_i$, $\lambda_i$ and $\delta$ has also been investigated as illustrated in Figs. \ref{fig:powerVSeh},  \ref{fig:powerVSsnr} and \ref{fig:powerVSdss} respectively. Each point in the figures is the average value of the total instantaneous transmit power calculated over $1000$ problem instances for a specific parameter combination; for comparison purposes these values have been normalized with respect to the maximum average value. As expected, smaller detection and energy harvesting thresholds result in lower transmit power requirements because the constraints are easier to satisfy. An interesting observation regarding Figs. \ref{fig:powerVSeh} and \ref{fig:powerVSsnr} is that there is a flat region of almost constant required power for different values of $\lambda_i$ and $\gamma_i$ respectively. This illustrates that the effect of the varying constraint is negligible below a certain threshold because of the power required to compute the non-varying threshold. For example, Fig. \ref{fig:powerVSeh} indicates that the transmitted power needed to satisfy a QoS threshold of $20$ dB, is enough to harvest at least $-22$ dBm of power.   

Regarding the effect of parameter $\delta$, Fig. \ref{fig:powerVSdss} demonstrates that when the ratio of direct to indirect channel variance increases the required transmission power decreases; this implies that the SINR constraints play a more important role than the EH constraints. The reason is that by increasing $\delta$ the SINR constraints can be satisfied easier, while the total power received at each destination is small due to weak cross-interference. Finally, notice that the average transmitted power increases for increasing number of users $K$.

\subsection{Interference Exploitation Benefit}

An interesting implication that arises from problem \eqref{initFormulation}, is that there is a clear trade-off between eliminating or allowing interference towards the satisfaction of the QoS and EH constraints. This implication diverges from the traditional beamforming design philosophy of only eliminating interference. To illustrate the benefit of exploiting interference in the context of RF-EH we compare the performance between ZF and optimal beamforming schemes. On the one hand, the ZF scheme cancels out interference between users minimizing the power required to satisfy the SINR constraints, but on the other hand, ZF fails to exploit interference to satisfy the RF-EH constraints; this must be accomplished solely from the associated transmitter. 

\begin{figure}[t]
\centering
\includegraphics[width=0.6\linewidth]{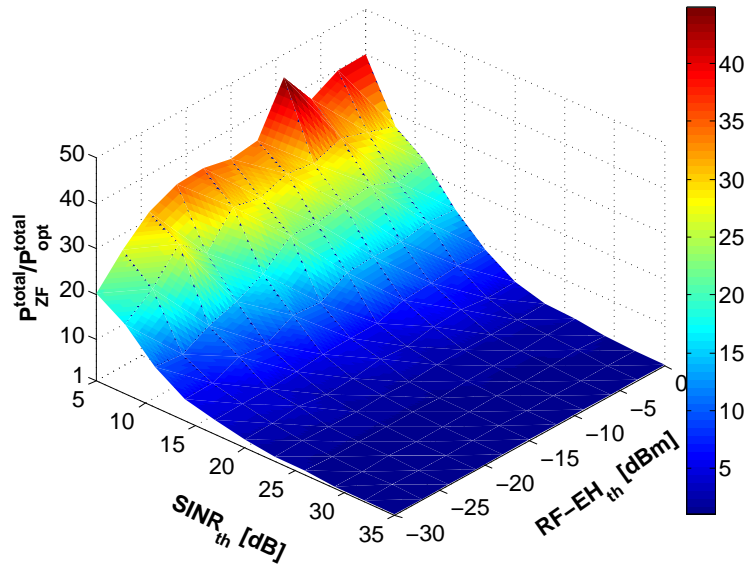}
\vspace{-0.8cm}
\caption{Average ratio between the total required power for the ZF beamforming scheme and the optimal scheme for varying SINR and EH thresholds for $K=8$ and $\delta = 5$.}
\label{fig:interferenceExploitation}%
\end{figure}

Fig. \ref{fig:interferenceExploitation} depicts the average ratio between the total required power when ZF beamforming is used and when the optimal solution is employed (obtained via SDR) for various values of the SINR threshold and the RF-EH threshold when $K=8$. For each combination of values, 100 problem instances were generated and solved to compute average ratio values. As can be seen, by exploiting interference we can significantly reduce the total transmitted power especially for low SINR values (up to 45 times in the considered scenarios). The reason for this is that when the SINR threshold is low, there is room to increase interference, which is beneficial for the RF-EH constraint, without violating the SINR threshold. On the other hand, having to satisfy large SINR thresholds is difficult and requires almost full cancellation of the interference; hence, the solutions obtained from the ZF beamformer are almost optimal. The benefits of interference exploitation can also be seen with respect to the RF-EH constraints: when the RF-EH threshold increases, the ZF/optimal power ratio increases because the optimal scheme manages interference better. 

\subsection{Trade-off between beamforming schemes}

\begin{figure}[t]
\centering

\subfigure[Absolute execution time.]{
  \includegraphics[width=0.47\linewidth]{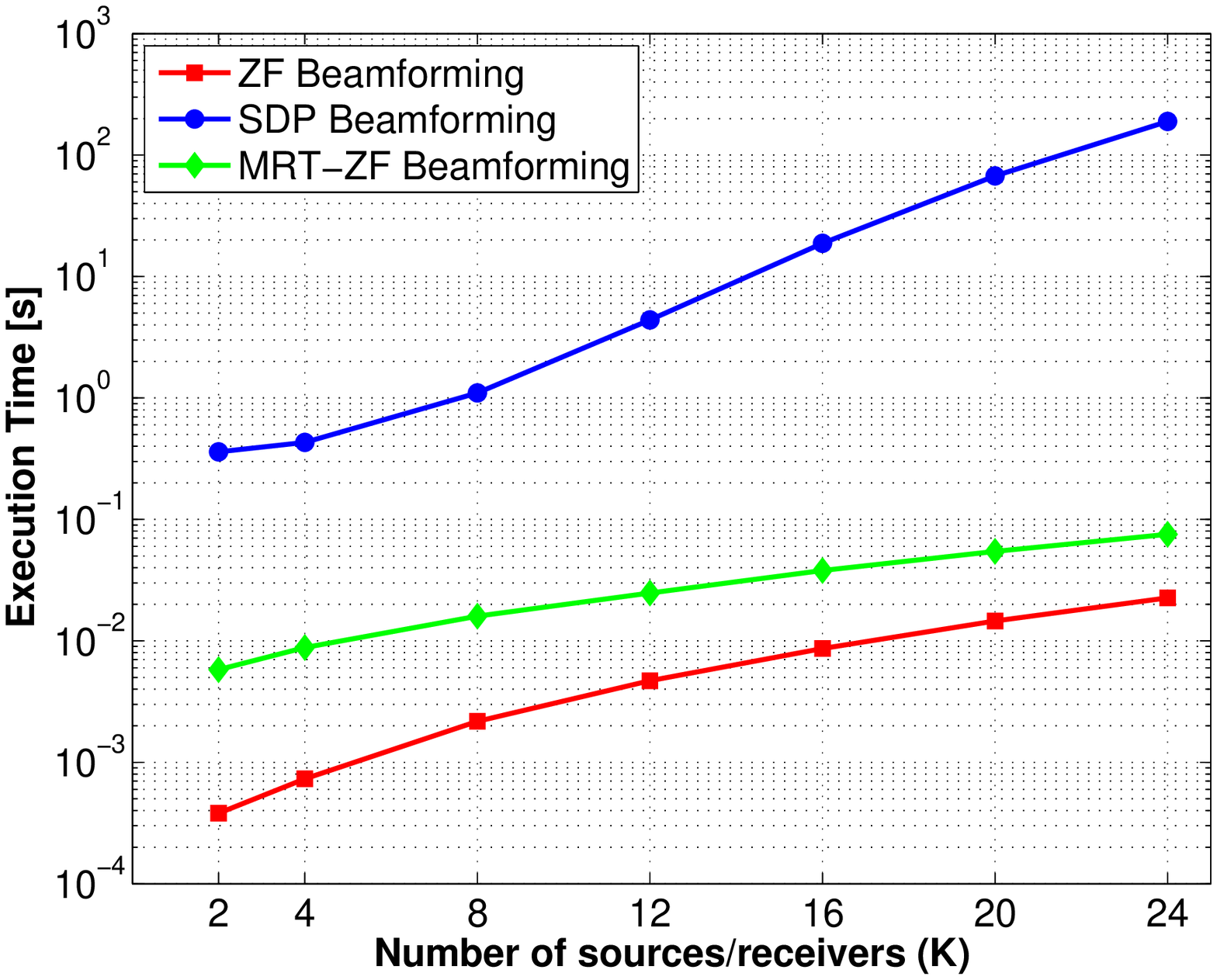}
  \label{fig:execTimes}
 }
 \subfigure[Relative execution time.]{
 \includegraphics[width=0.47\linewidth]{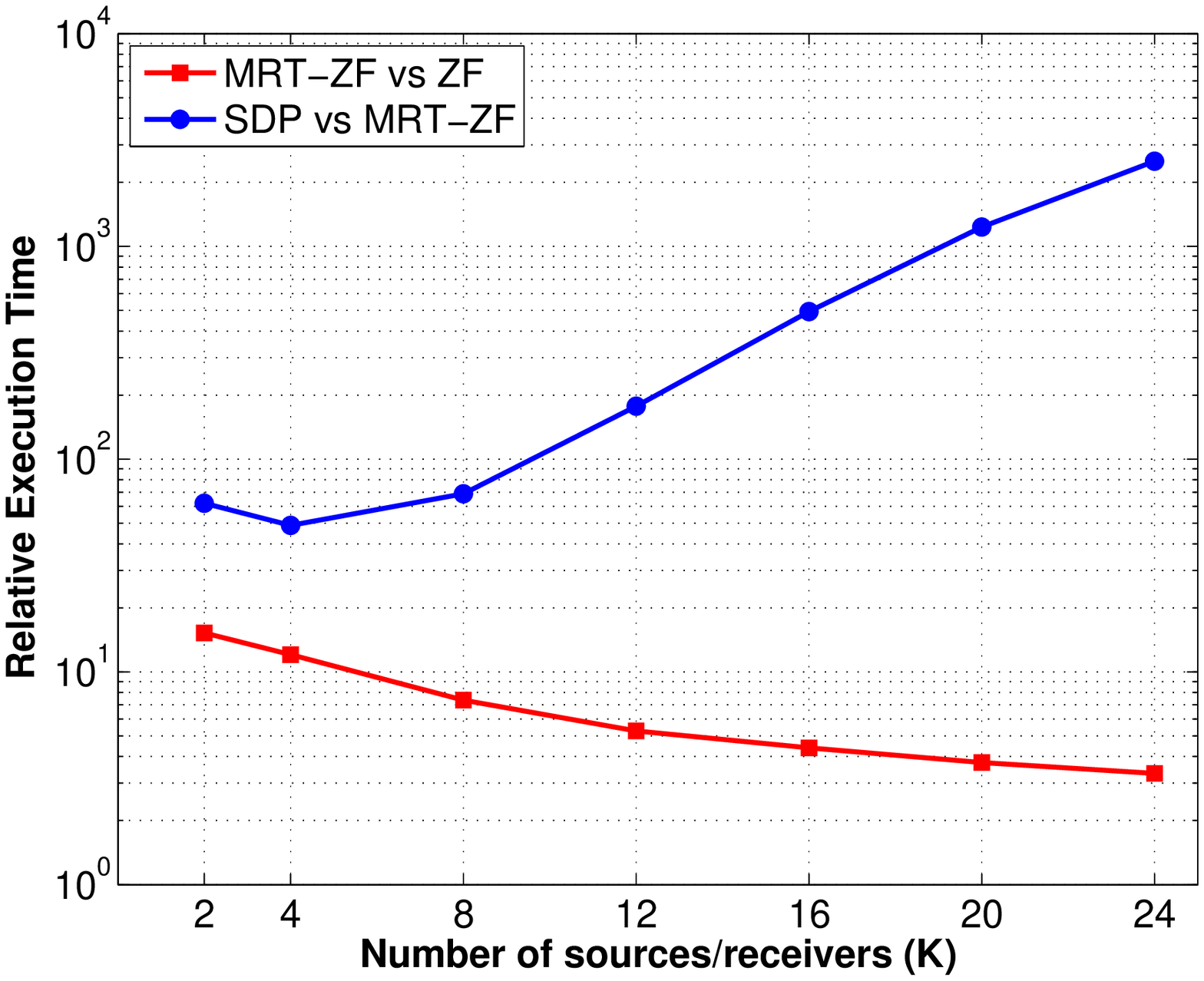}
   \label{fig:execTimesRelative}
 }

\label{myfigure}
\vspace{-0.3cm}
\caption{Time needed to obtain ZF, MRT-ZF and optimal(SDP) beamforming weights for different problem sizes.}
\end{figure}

Given the fact that the SDP beamforming scheme provides optimal solutions to all instances considered, why should one consider a different beamforming scheme? One important issue that needs to be taken into consideration before deciding which beamforming scheme to use is related to the imposed time-constraints regarding the solution computation. In other words, how much time is available to compute appropriate beamforming weights? To answer the question one needs to consider the complexity of computing the solution for the different beamforming schemes proposed. In this section we empirically investigate the computational complexity of the ZF, MRT-ZF and optimal (SDP) schemes by observing their execution times for different different problems.

It is well-known that the complexity of solving SDP problems is significantly higher than that of SOCP problems. Hence it is expected that the execution time for optimal beamforming will be higher compared to MRT-ZF. On the other hand, for ZF beamforming we need to compute the pseudo-inverse of different matrices and also to find the closed-form  solution for $P_i$ and $\rho_i$.

Figure \ref{fig:execTimes}, presents the average execution time (in logarithmic scale) of these beamforming schemes for $20$ problem instances for a particular parameter combination and different values of $K$\footnote{All problems were executed on a desktop computer with an Intel Core 2 Duo CPU at 3GHz(E8400) and 3GB of DDR2 RAM.}. The figure demonstrates that the average execution times of ZF and MRT-ZF are below $0.025$s and $0.08$s for all $K$ considered, while the SDP approach employed to yield the optimal solution requires about $190$s for $K=24$. Hence it is clear that as $K$ increases the use of optimal beamforming becomes prohibitive, especially for $K\ge12$.  Regarding their relative execution times depicted on Fig. \ref{fig:execTimesRelative}, as $K$ increases the relative execution time of MRT-ZF compared to ZF decreases, while the execution time for optimal beamforming grows significantly relative to the MRT-ZF time. Interestingly, for $K=24$, the relative execution time of MRT-ZF compared to ZF is only $3.3$, while the relative execution time of optimal beamforming compared to MRT-ZF is more that $2500$ times.

Based on these results, it is clear that although the SDP approach provides the best performance in terms of optimality, MRT-ZF provides the best trade-off between optimality/feasibility performance and execution time, especially as $K$ grows large.

\section{Conclusion}
This paper has dealt with the RF-EH power splitting technique for a MISO interference channel with QoS and EH constraints. The minimum required energy has been formulated via an optimization problem for constant or variable beamforming weights at the sources.  Solution algorithms have been investigated for three standard MISO beamforming designs, ZF, RZF and MRT, for a hybrid beamforming scheme, MRT-ZF, combining the MRT and ZF beamformers, as well as for variable weight beamforming that provides the optimal solution. For ZF, a closed-form always feasible solution has been derived, while for MRT and RZF a convex SOCP program has been obtained and solved to optimality. For MRT-ZF, an algorithm has been developed which requires the solution of two SOCP problems; MRT-ZF significantly outperformed all other fixed weight beamforming schemes. For the solution of the optimal beamforming problem an approximate SDP formulation has been developed and it was theoretically proved that it provides optimal results for two and three users; in practice the algorithm always exhibited optimal performance. Finally, the computational complexity of the different schemes has been examined indicating that SDP is prohibitive for medium scale systems (e.g. with 20 users), while MRT-ZF provides the best trade-off between computational complexity and optimality. An extension of this work is to apply the investigated schemes for scenarios with limited channel feedback and imperfect CSI at the transmitters.

\appendices

\section{Proof of Theorem 2}

 In optimization problem ${\cal ZF}$, the $i$th set of constraints
\eqref{eq:ZF2}-\eqref{eq:ZF4} are decoupled as the variables $P_{i}$, $\rho_{i}$ do not appear
in other constraints; also, the variables in the objective function
are in separable form. Hence, problem ${\cal ZF}$ can be decomposed
into $K$ independent problems ${\cal SZF}_{i}$, defined as:
\begin{eqnarray}
{\cal SZF}_{i}: & {\displaystyle \min_{P_{i},\rho_{i}}} & P_{i}\nonumber \\
\frac{\rho_{i}G_{i,i}P_{i}}{\rho_{i}\sigma^{2}+\sigma_{C}^{2}} & \ge & \gamma_{i},\label{eq:szf_constr1}\\
(1-\rho_{i})(G_{i,i}P_{i}+\sigma^{2}) & \ge & \lambda_{i}\label{eq:szf_constr2}\\
P_{i}\ge0 & , & 0\le\rho_{i}\le1.\nonumber
\end{eqnarray}

\vspace{-0.1cm}
The optimal solution of ${\cal ZF}$ is equal to the sum of the optimal solutions of ${\cal SZF}_{i}$. Note that in the considered problem, $\rho_{i}\in(0,1)$, otherwise the problem constraints are not satisfied for $\gamma_{i},\lambda_{i}>0$.

Assuming that:
\vspace{-0.1cm}
\begin{eqnarray}
x_{i} = G_{i,i}P_{i}+\sigma^{2}\ge\sigma^{2},\;\; \alpha_i = (\gamma_{i}+1)\sigma^{2},\;\; \beta_i =\gamma_{i}\sigma_{C}^{2}, \nonumber 
\end{eqnarray}

\noindent constraints (\ref{eq:szf_constr1}) and (\ref{eq:szf_constr2}) can
be written as:
\begin{eqnarray}
\rho_{i} & \ge & \frac{\beta_i}{x_{i}-\alpha_i}<1,\label{eq:szf_constr1ab}\\
\rho_{i} & \le & 1-\frac{\lambda_{i}}{x_{i}}>0.\label{eq:szf_constr2ab}
\end{eqnarray}

This implies that $x_{i}>\alpha_i+\beta_i$ and $x_{i}>\lambda_{i}$,
otherwise $\rho_{i}\notin(0,1)$; in other words $x_i$ must be greater than $\max(\lambda_{i},\alpha_i+\beta_i)$.
Problem ${\cal SZF}_{i}$ requires at least one of the constraints
to be binding, otherwise the value of $P_{i}$ can further be reduced.
Hence, at least one of the following two equalities must be true at
the optimal solution.
\begin{equation}
\rho_{i}=\frac{\beta_i}{x_{i}-\alpha_i},\label{eq:szf_constr1_binding}
\end{equation}
\begin{equation}
\rho_{i}=1-\frac{\lambda_{i}}{x_{i}}.\label{eq:szf_constr2_binding}
\end{equation}

Let us assume that (\ref{eq:szf_constr1_binding}) is true. Substituting this equation into (\ref{eq:szf_constr2ab}) and rearranging the terms yields:
\begin{equation}
x_{i}^{2}-(\alpha_i+\beta_i+\lambda_{i})x_{i}+\alpha_i\lambda_{i}\ge0.  \label{eq:szf_ineq1}
\end{equation}

Interestingly, we obtain (\ref{eq:szf_ineq1}) even if we follow the same procedure for the other
binding constraint. This implies that binding any of the two equations will yield the same solution.
Because $x_{i}$ is a monotonically increasing function of $P_{i}$,
the optimal solution to problem ${\cal SZF}_{i}$ is the smallest
value of $x_{i}$ which satisfies (\ref{eq:szf_ineq1}) and $x_{i}>\max(\lambda_{i},\alpha_i+\beta_i)$.
It can be easily verified that the
discriminant $\Delta$ of the quadratic expression in \eqref{eq:szf_ineq1} is always positive, which implies that there are
two distinct real solutions $x_{1}$ and $x_{2}$, and that the feasible region of \eqref{eq:szf_ineq1} is $x\le x_{1}$ and $x\ge x_{2}$,
where:
\begin{align}
x_{1} &=\frac{1}{2}\left(\alpha_i+\beta_i+\lambda_{i}-\sqrt{(\alpha_i+\beta_i+\lambda_{i})^{2}-4\alpha_i\lambda_{i}}\right), \label{eq:szf_sol1}\\
x_{2} &=\frac{1}{2}\left(\alpha_i+\beta_i+\lambda_{i}+\sqrt{(\alpha_i+\beta_i+\lambda_{i})^{2}-4\alpha_i\lambda_{i}}\right). \label{eq:szf_sol2}
\end{align}

It can be easily    shown that $x_{1}<\max(\lambda_{i},\alpha_i+\beta_i)$, because
$\Delta>0$, and $\alpha_i+\beta_i+\lambda_{i}\le2\max(\lambda_{i},\alpha_i+\beta_i)$.
Next we show that $\max(\lambda_{i},\alpha_i+\beta_i)<x_{2}$.

If we assume that $\lambda_{i}\ge\alpha_i+\beta_i$ we need to show that:
\begin{align*}
&\alpha_i+\beta_i+\lambda_{i}+\sqrt{(\alpha_i+\beta_i+\lambda_{i})^{2}-4\alpha_i\lambda_{i}} > 2\lambda_{i}\Rightarrow\\
&(\alpha_i+\beta_i+\lambda_{i})^{2}-4\alpha_i\lambda_{i} > (\alpha_i+\beta_i-\lambda_{i})^{2}\Rightarrow\\
&4\beta_i\lambda_{i} > 0,
\end{align*}
\noindent where the latter inequality is true because $\beta_i,\lambda_{i}>0$.
In a similar manner we can easily show that $\max(\lambda_{i},\alpha_i+\beta_i)<x_{2}$
when $\lambda_{i}\le\alpha_i+\beta_i$. Hence, we have shown that $x_{1}<\max(\lambda_{i},\alpha_i+\beta_i)<x_{2}$,
which implies that the optimal solution is $x_{i}^{*}=x_{2}$. In
addition, it can be derived that $\rho_{i}^{*}=\frac{\beta_i}{x_{i}^{*}-\alpha_i}=1-\frac{\lambda_{i}}{x_{i}^{*}}$
which implies that both constraints (\ref{eq:szf_constr1ab}) and
(\ref{eq:szf_constr2ab}) are binding at the solution. Having derived
$x_{i}^{*}$ and $\rho_{i}^{*}$, the optimal power value is $P_{i}^{*}=\frac{1}{G_{i,i}}(x_{i}^{*}-\sigma^{2})$ which
completes the proof.

\vspace{-0.3cm}

\end{document}